\documentclass[aps,prl,reprint,superscriptaddress,twocolumn,longbibliography]{revtex4-1}
\usepackage{amsfonts,amssymb,amscd,amsthm}
\usepackage{graphicx}
\usepackage{mathrsfs}
\usepackage[intlimits]{amsmath}
\usepackage[colorlinks, citecolor=red]{hyperref}
\usepackage{etoolbox}

\theoremstyle{definition}
\newtheorem*{theorem*}{Theorem}

\begin{document}
\title{Stretched Exponential Scaling of Parity-Restricted Energy Gaps in a Random Transverse-Field Ising Model}

\author{G.-X. Tang}
\thanks{These authors contribute equally to this work}%
\affiliation{Center for Quantum Information, Institute for Interdisciplinary Information Sciences, Tsinghua University, Beijing 100084, PR China}
\affiliation{Shanghai Qi Zhi Institute, AI Tower, Xuhui District, Shanghai 200232, PR China}

\author{J.-Z. Zhuang}
\thanks{These authors contribute equally to this work}%
\affiliation{Center for Quantum Information, Institute for Interdisciplinary Information Sciences, Tsinghua University, Beijing 100084, PR China}
\affiliation{Shanghai Qi Zhi Institute, AI Tower, Xuhui District, Shanghai 200232, PR China}

\author{L.-M. Duan}
\email{lmduan@tsinghua.edu.cn}
\affiliation{Center for Quantum Information, Institute for Interdisciplinary Information Sciences, Tsinghua University, Beijing 100084, PR China}
\affiliation{Hefei National Laboratory, Hefei 230088, PR China}

\author{Y.-K. Wu}
\email{wyukai@mail.tsinghua.edu.cn}
\affiliation{Center for Quantum Information, Institute for Interdisciplinary Information Sciences, Tsinghua University, Beijing 100084, PR China}
\affiliation{Hefei National Laboratory, Hefei 230088, PR China}

\begin{abstract}
The success of a quantum annealing algorithm requires a polynomial scaling of the energy gap. Recently it was shown that a two-dimensional transverse-field Ising model on a square lattice with nearest-neighbor $\pm J$ random coupling has a polynomial energy gap in the symmetric subspace of the parity operator [Nature 631, 749-754 (2024)], indicating the efficient preparation of its ground states by quantum annealing. However, it is not clear if this result can be generalized to other spin glass models with continuous or biased randomness. Here we prove that under general independent and identical distributions (i.i.d.) of the exchange energies, the energy gap of a one-dimensional random transverse-field Ising model follows a stretched exponential scaling even in the parity-restricted subspace. We discuss the implication of this result to quantum annealing problems.
\end{abstract}

\maketitle

Ising model is one of the best-known physical models in statistical mechanics. Originating from the study of ferromagnetism \cite{LenzBeitragZV,Ising1925,PhysRev.65.117}, it has now become a testing ground for understanding classical and quantum phase transitions \cite{sachdev2000quantum}, and is widely used in fields like neural networks \cite{ACKLEY1985147} and optimization \cite{kirkpatrick1983optimization,10.3389/fphy.2014.00005}. In particular, with disorder and geometric frustrations in the interaction of the Ising model, a spin glass phase can appear which possesses a complicated energy landscape and a quasi-random ground state without translational symmetry \cite{RevModPhys.58.801,Sherrington2025}. It turns out that finding the ground state of such a general Ising model is NP-hard, as many well-known NP-complete problems can be mapped to the ground state configuration of an Ising model with suitable exchange energies \cite{Barahona_1982,10.1145/335305.335316,10.3389/fphy.2014.00005}.

Although it is widely believed that an NP-complete problem will not be efficiently solved even by a quantum computer, several quantum algorithms such as quantum annealing \cite{PhysRevE.58.5355,doi:10.1126/science.284.5415.779,RevModPhys.80.1061,Yarkoni_2022} and the quantum approximate optimization algorithm (QAOA) \cite{farhi2014quantumapproximateoptimizationalgorithm,PhysRevX.10.021067} have been developed and applied to solve the ground state of the Ising spin glass, in the hope that speedup can still be achieved for special classes of problems relevant to physics and industry. While a systematic understanding of the performance of QAOA is challenging \cite{PhysRevX.10.021067}, the success of quantum annealing or adiabatic quantum computation requires an annealing timescale of at least $O(1/\Delta^2)$ where $\Delta$ is the energy gap between the ground state and the first excited state \cite{RevModPhys.80.1061,RevModPhys.90.015002}, which may be improved to $O(1/\Delta)$ considering nonadiabatic paths \cite{PhysRevA.84.012312,boixo2010fastquantumalgorithmstraversing}. Therefore, a critical question for efficiently preparing the ground state of certain Ising models by quantum annealing, or efficiently solving certain families of combinatorial optimization problems, is whether the energy gap decays polynomially or exponentially with the increasing system size.

A random transverse-field Ising model is a prototypical model for understanding the critical behavior of a spin glass, with its exchange energies and/or the transverse fields drawn independently from certain random distributions \cite{RevModPhys.58.801,PhysRevLett.69.534,PhysRevB.51.6411,PhysRevB.53.8486}. Usually two types of random distributions are considered: a continuous Gaussian distribution as in the Edwards-Anderson model \cite{Edwards_1975,RevModPhys.58.801}, and a discrete distribution as in the $\pm J$ model or the frustration model \cite{Vannimenus_1977,RevModPhys.58.801}. In both cases, it has been shown that the energy gap $\Delta$ follows an activated (stretched exponential) scaling $\ln(1/\Delta)\sim L^\psi$ where $L$ is the length scale of the system and $\psi$ is a critical exponent with $\psi=1/2$ in one dimension (1D) \cite{PhysRevLett.69.534,PhysRevB.51.6411,PhysRevB.53.8486,PhysRevB.58.9131,FISHER1999222}, $\psi\approx 0.48$ in two dimensions (2D) \cite{PhysRevB.82.054437,PhysRevLett.81.5916,PhysRevB.61.1160}, and $\psi\approx 0.46$ in three and four dimensions \cite{PhysRevB.83.174207}. Note that the exponential scaling of the energy gap does not exclude the possibility that more efficient quantum algorithms can be developed, e.g. by a sophisticated design of the quantum annealing path from different initial states. Indeed, there exist polynomial algorithms to solve the ground state of a 1D classical Ising model with nearest-neighbor interactions, or a 2D classical Ising model on a planar graph \cite{Barahona_1982,plummer1986matching}.

Recently, Ref.~\cite{Bernaschi2024} shows that, despite the overall exponential energy gap between the ground state and the first excited state, the energy gap relevant for quantum annealing may remain algebraic if the evolution path is restricted to the subspace of parity symmetry. It is thus suggested that a standard quantum annealing process may still be able to prepare the ground state of a wide family of spin glass problems efficiently as long as the parity symmetry is protected. In that work, a 2D square lattice with $\pm J$ random coupling between nearest neighbors is considered. This leads to the question whether the same conclusion holds for more general problems with continuous or biased randomness, as in many practical optimization problems. In this work, we answer this question by considering a 1D random transverse-field Ising model. We prove analytically that its energy gap follows an activated scaling even in the parity symmetry subspace whenever the independently sampled exchange energy $|J|$ has a nonzero standard deviation. During the proof, we also obtain an analytical formula to upper-bound this parity-restricted energy gap, which is more robust against numerical errors than the direct diagonalization of a free-fermion Hamiltonian \cite{LIEB1961407,PhysRev.127.1508,PFEUTY197079,PhysRevB.53.8486}. With this tool, we further show that as the system size increases, the level of randomness that can be tolerated for the exponential scaling to be subdominant will decrease accordingly, raising the demand for a fault-tolerant device for large-scale problems even with a polynomial energy gap in the ideal case. We also note that our assumption of the Ising coupling being independently sampled, although widely used in the literature, may not be necessary for the occurrence of the activated scaling of the energy gap. We explicitly construct a deterministic sequence with physical motivations and numerically show that its parity-restricted energy gap is bounded by an activated scaling up to large system sizes.

\emph{Model and main results.} Throughout this work, we consider a 1D transverse-field Ising model with open boundary conditions
\begin{equation}
H = -\sum_{i=1}^{L-1} J_{i,i+1} X_i X_{i+1} - h \sum_{i=1}^L Z_i,
\end{equation}
where $X_i$ and $Z_i$ represent the Pauli operators for the $i$th spin. This Hamiltonian commutes with the parity operator $P\equiv \prod_{i=1}^L Z_i$, so that we can assign an even ($P=+1$) or odd ($P=-1$) parity to each energy eigenstate. We allow the exchange energy $J_{i,i+1}$ to take arbitrary values as in general optimization problems, and we apply a uniform transverse field $h$ as in the quantum annealing setting.

Suppose the exchange energies $J_{i,i+1}$ are drawn from independent and identical distributions (i.i.d.) with a positive standard deviation $\mathrm{std}(\ln|J|)>0$. Note that for a 1D Ising model with nearest-neighbor interactions, we can always flip the spins sequentially (by applying a unitary transform $Z_i$) from one side to the other to set their interactions to be positive without changing the level structures, so the sign of $J_{i,i+1}$ does not matter to the energy gap. Also here we assume $J_{i,i+1}\ne 0$ ($\forall i$) or that it has a probability measure of zero, because otherwise we are basically just dividing a long chain into smaller segments.

The critical point of the above 1D random transverse-field Ising model (RTIM) is at $\ln |h_c|=\overline{\ln|J|}$ \cite{PhysRevB.51.6411,PhysRevB.53.8486,PhysRevB.58.9131} where the overline represents an average over random realizations. Our main result can be summarized as the following theorem.

\begin{figure}[!tbp]
   \includegraphics[width=\linewidth]{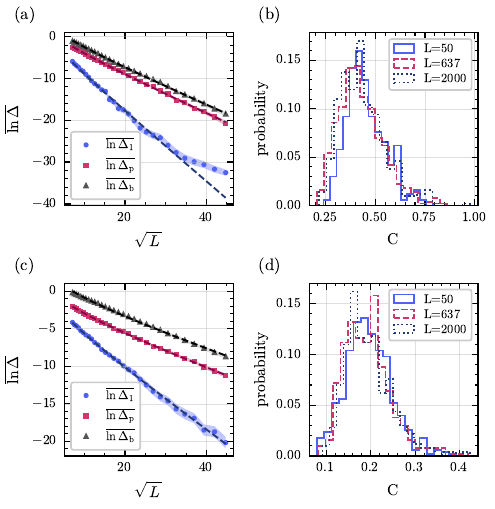}
   \caption{Numerical results for the activated scaling of RTIM's energy gaps. (a) The standard energy gap $\Delta_1$ (blue dots), the parity-restricted energy gap $\Delta_p$ (red squares) and the derived upper bound $\Delta_b$ (black triangles) versus the system size $L$, when the random Ising couplings are independently drawn from a Gaussian distribution $\mathcal{N}(\mu,\sigma^2)$ with a mean value $\mu=1$ and a standard deviation $\sigma=0.5$. The transverse field $h$ is fixed at the critical point $h_c=\exp(\overline{\ln|J|})$ where the average over the random distribution is estimated from a sample size of ten times the largest system size. Here we plot the energy gaps in logarithmic scale and the system size in square root scale to visualize the activated scaling as a straight line. Each data point is the geometric average (typical gap \cite{PhysRevB.58.9131}) over 500 random realizations, with the shaded area representing 99\% confidence intervals. The dashed lines are linear fitting results, where we have dropped the data points for $\Delta_1<10^{-13}$ to suppress the numerical error. (b) The histogram of $C\equiv L^{-1/2}\ln\Delta_p$. The three curves for $L=50,\,637,\,2000$ collapse well with each other. (c) and (d) Similar plots when the random Ising couplings are independently drawn from a two-point distribution $\{0.5,\,1\}$ with equal probabilities.
   \label{fig:exponential_gap}}
\end{figure}

\begin{theorem*}
At the critical point $h=h_c$, with high probability, the parity-restricted energy gap of the above 1D RTIM will be bounded by an activated scaling. Specifically, for any targeted failure probability $\epsilon>0$ there exists a constant $c>0$ independent of the system size $L$ such that $\lim_{L\to \infty} \mathrm{Pr}[\Delta_p(L)\le e^{-c\sqrt{L}}]\ge 1-\epsilon$, where $\Delta_p(L)$ is the energy gap of a 1D RTIM of a size $L$ between the ground state and the first excited state with the same parity.
\end{theorem*}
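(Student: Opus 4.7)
The plan is to reduce the problem to free fermions via the Jordan--Wigner transformation and then bound the two smallest single-particle energies using random-walk estimates. After Jordan--Wigner the Hamiltonian takes the form $H=E_0+\sum_{k=1}^L\epsilon_k\eta_k^\dagger\eta_k$ with single-particle energies $0\le\epsilon_1\le\cdots\le\epsilon_L$ given by the singular values of the upper-bidiagonal matrix $B\in\mathbb{R}^{L\times L}$ whose diagonal entries are $h$ and whose superdiagonal entries are the $J_{i,i+1}$. The parity operator is mapped to $(-1)^{\sum_k\eta_k^\dagger\eta_k}$, so the ground state (the Bogoliubov vacuum) has parity $+1$ and the lowest parity-preserving excitation flips two Bogoliubov modes, with energy
\[
\Delta_p=\epsilon_1+\epsilon_2.
\]
It therefore suffices to exhibit, with probability $\ge1-\epsilon$, a two-dimensional subspace $V\subset\mathbb{R}^L$ on which $\|Bv\|/\|v\|\le e^{-c\sqrt L}$; Courant--Fischer then gives $\sigma_{L-1}(B)=\epsilon_2\le e^{-c\sqrt L}$ and hence $\Delta_p\le 2e^{-c\sqrt L}$.

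I would build $V$ from two disjointly supported approximate zero modes of $B$, one on each half of the chain split at $k^\star=\lfloor L/2\rfloor$. On the left half, define $u^{(L)}$ by $u^{(L)}_1=1$, $u^{(L)}_{k+1}=-(h/J_{k,k+1})u^{(L)}_k$ for $1\le k<k^\star$, and $u^{(L)}_k=0$ for $k>k^\star$. By construction every row of $Bu^{(L)}$ vanishes except the $k^\star$-th, which equals $hu^{(L)}_{k^\star}$. Introducing $Y_i=\ln(h_c/|J_{i,i+1}|)$ and the partial sums $S_j=\sum_{i\le j}Y_i$, the critical condition $h=h_c$ gives $\mathbb{E}[Y_i]=0$ while $\mathrm{std}(\ln|J|)>0$ gives $\mathrm{var}(Y_i)>0$, so $\{S_j\}$ is a nondegenerate mean-zero random walk and $\ln|u^{(L)}_{k+1}|=S_k$. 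Conditioning on the left-half walk both attaining a maximum $\ge\alpha\sqrt L$ and satisfying $S_{k^\star-1}\le\alpha\sqrt L/2$, an event of probability $\ge 1-\epsilon/3$ by Donsker's invariance principle on $[0,1/2]$, yields $\|u^{(L)}\|_2^2\ge e^{2\alpha\sqrt L}$ together with $|u^{(L)}_{k^\star}|\le e^{\alpha\sqrt L/2}$, hence $\|Bu^{(L)}\|/\|u^{(L)}\|_2\le h\,e^{-\alpha\sqrt L/2}$.

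The right vector $u^{(R)}$ is built by the same forward recursion on $[k^\star+1,L]$ with $u^{(R)}_{k^\star+1}=1$, extended by zero. It is automatically orthogonal to $u^{(L)}$ by disjoint support, and a direct row-by-row calculation gives
\[
\|Bu^{(R)}\|^2=J_{k^\star,k^\star+1}^2+h^2|u^{(R)}_L|^2,
\]
where the first term comes from the ``bridge'' row $k^\star$ and the second from the truncation at row $L$. Imposing on the right-half walk the analogous event (large positive excursion followed by a return near the endpoint), of probability $\ge 1-\epsilon/3$, gives $\|u^{(R)}\|_2^2\ge e^{2\alpha\sqrt L}$ and $|u^{(R)}_L|\le e^{\alpha\sqrt L/2}$, hence $\|Bu^{(R)}\|/\|u^{(R)}\|_2\le(h+J_{k^\star,k^\star+1})e^{-\alpha\sqrt L/2}$. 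Applying Courant--Fischer to $\mathrm{span}\{u^{(L)},u^{(R)}\}$ and absorbing the $O(1)$ prefactors into the exponent yields $\epsilon_2\le e^{-c\sqrt L}$ for some $c>0$ depending only on $\epsilon$ and $\mathrm{std}(\ln|J|)$, and a union bound over the two half-chain events completes the argument.

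The hard part is coordinating the favorable random-walk events on the two halves so that they hold simultaneously with probability $\ge 1-\epsilon$ uniformly in $L$: a large excursion in the first half with the walk returning near the midpoint, and an analogous excursion in the second half with the walk returning near the right endpoint. Because the walk has independent increments, Donsker's theorem reduces this to a single event about a rescaled Brownian path on $[0,1]$ whose probability tends to $1$ as $\alpha\to 0$; so one fixes $\epsilon>0$ first, then chooses $\alpha>0$ small enough for that Brownian event, and finally sets $c=\alpha/4$ (after absorbing $\mathrm{std}(\ln|J|)$). This confirms $\Pr[\Delta_p(L)\le e^{-c\sqrt L}]\ge 1-\epsilon$ for all sufficiently large $L$, with $c$ independent of $L$, as required.
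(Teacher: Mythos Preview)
Your strategy coincides with the paper's: Jordan--Wigner gives $\Delta_p=\epsilon_1+\epsilon_2$, and you bound $\epsilon_2$ via Courant--Fischer on a two-dimensional subspace built from disjointly supported trial vectors on the two halves of the chain, controlling the Rayleigh quotients with random-walk estimates. The paper phrases the same construction through the transition matrix of a birth--death chain with Neumann/Dirichlet boundaries (so that cutting the chain yields $\lambda_2(L)\le 2\max\{\lambda_1^{ND}(L_1),\lambda_1^{DD}(L_2)\}$), but this is exactly your bidiagonal $B$ after squaring; your left and right trial vectors are precisely the ND and DD pieces.

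There is, however, a genuine gap in your probability estimate. The event you impose on the left half,
\[
\Bigl\{\max_{k<k^\star} S_k \ge \alpha\sqrt L\Bigr\}\ \cap\ \Bigl\{S_{k^\star-1}\le \tfrac{\alpha}{2}\sqrt L\Bigr\},
\]
does \emph{not} have probability tending to $1$ as $\alpha\to 0$: under Donsker the endpoint condition alone converges to $\Pr[\sigma W_{1/2}\le \alpha/2]\to\Phi(0)=1/2$, so your intersection is capped near $1/2$, and after the analogous right-half event near $1/4$. You therefore cannot reach $1-\epsilon$ for small $\epsilon$ as written. The fix is immediate once you notice that the ratio $\|Bu^{(L)}\|/\|u^{(L)}\|$ is controlled by the single quantity $\max_{k<k^\star}S_k - S_{k^\star-1}$, not by the two pieces separately. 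In the Brownian limit $\max_{t\le 1/2}W_t - W_{1/2}$ is a.s.\ strictly positive (the arcsine law makes the argmax a.s.\ interior), so $\Pr[\max S - S_{\mathrm{end}}\ge \alpha\sqrt L]\to 1$ as $\alpha\to 0$; condition on this event instead. On the right half you need both $M'\ge\alpha\sqrt L$ (to kill the bridge term $J_{k^\star,k^\star+1}$) and $M'-S'_{\mathrm{end}}\ge\alpha\sqrt L$ (to kill $h|u^{(R)}_L|$); each has Brownian-limit probability tending to $1$, so a union bound suffices. With this correction your Courant--Fischer argument goes through and delivers $\Pr[\Delta_p\le e^{-c\sqrt L}]\ge 1-\epsilon$ for arbitrary $\epsilon>0$, matching the paper's conclusion.
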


\begin{proof}[Proof sketch]
Here we give a sketch of the proof, while the details can be found in Supplemental Material. (1) It is well-known that a 1D RTIM can be mapped to a free-fermion model by Jordan-Wigner transformation, and its eigenstates can be described by the excitation of different fermionic modes with energies $\epsilon_1\le \epsilon_2\le\cdots$ \cite{LIEB1961407,PhysRev.127.1508,PFEUTY197079,PhysRevB.53.8486}. Note that each fermionic excitation flips the parity of the eigenstate, therefore the standard energy gap is given by $\Delta_1=\epsilon_1$ while the parity-restricted energy gap by $\Delta_p=\epsilon_1+\epsilon_2$. It has already been proved that $\epsilon_1$ follows an activated scaling $e^{-c_1\sqrt{L}}$ \cite{PhysRevLett.69.534,PhysRevB.51.6411,PhysRevB.53.8486,PhysRevB.58.9131,FISHER1999222}, so for our purpose it suffices to show that $\epsilon_2$ follows a similar scaling. (2) The eigenvalues of the free-fermion Hamiltonian can further be expressed as a random walk with a reflecting (Neumann) boundary at one side and an absorbing (Dirichlet) boundary at the other side \cite{PhysRevB.106.064204}. The transition matrix $T$ of this random walk has eigenvalues $\lambda_i=\epsilon_i^2$ ($i=1,\,2,\,\cdots$). (3) As a standard procedure for a reversible Markov chain, the eigenvalues $\lambda_i$'s of the transition matrix $T$ can be written as a Rayleigh quotient \cite{aldous-fill-2014}, so the second smallest eigenvalue $\lambda_2=\epsilon_2^2$ can be formulated by the Courant-Fischer theorem as a minimization of the maximal Rayleigh quotient over all the two-dimensional subspaces. In particular, we can break the length-$L$ chain into two segments with sizes $L_1$ and $L_2$ ($L_1+L_2=L$), and consider a two-dimensional subspace spanned by two vectors which are supported on the two segments, respectively. It thus provides an upper bound $\lambda_2(L)\le 2\max\{\lambda_1^{ND}(L_1),\lambda_1^{DD}(L_2)\}$ from the smallest eigenvalues for these two segments, where the superscripts $N$ and $D$ represent the Neumann and Dirichlet boundary conditions, respectively. (Note that when breaking a chain into two parts, we create Dirichlet boundaries in the middle.) (4) The first term in the maximum follows the previously proved activated scaling, while the second term can again be bounded by the Courant-Fischer theorem using a suitably constructed trial vector. Combining all these results, we obtain an activated scaling for the parity-restricted energy gap $\Delta_p$ versus the system size $L$.
\end{proof}

\begin{figure}[!tbp]
   \includegraphics[width=\linewidth]{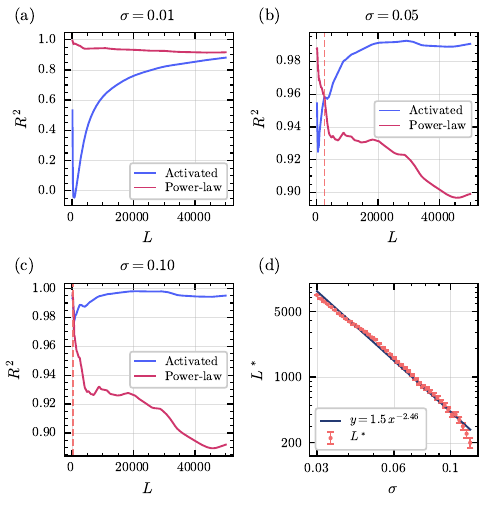}
   \caption{Quality of different fitting models versus level of random fluctuation. After obtaining the upper bound of the parity-restricted energy gap $\Delta_b$ versus the system size $L$ as in Fig.~\ref{fig:exponential_gap}(a), we can linearly fit them by either the $\ln \Delta_b$ vs. $\sqrt{L}$ model or the $\ln \Delta_b$ vs. $\ln L$ model, so as to numerically distinguish between an activated scaling and a polynomial scaling. Here we choose three typical standard deviations of the Gaussian distribution in (a) $\sigma=0.01$, (b) $\sigma=0.05$ and (c) $\sigma=0.1$, and examine the quality of these two fitting models versus different system sizes $L$ up to which the fitting is performed. For small random fluctuation, initially the polynomial fitting can give higher $R^2$ values and the activated scaling outperforms only when the system size gets above certain value $L^*$ [indicated by the vertical dashed lines in (b) and (c)]. (d) The critical system size $L^*$ after which the activated scaling dominates versus the random fluctuation level $\sigma$. The energy gaps are evaluated at system sizes with an increment of $\Delta L=50$, so we estimated an error bar of $\pm \Delta L/2$ for the extracted $L^*$. A power-law relation $L^*\sim \sigma^{-2.46}$ is fitted.
   \label{fig:scaling}}
\end{figure}

\emph{Numerical results and discussions.} To further illustrate this activated scaling and the validity of our upper bound, we present numerical results of the energy gaps for different types of i.i.d. distributions in Fig.~\ref{fig:exponential_gap}. Throughout this work, we consider typical energy gaps as the geometric mean over random realizations \cite{PhysRevB.58.9131}. In the upper (lower) panel, we choose a continuous (discrete) distribution for the Ising couplings. The exact values of the standard energy gap $\Delta_1$ and the parity-restricted energy gap $\Delta_p$ are obtained by mapping to a free-fermion model \cite{PhysRevB.106.064204}, while the upper bound $\Delta_b$ for the parity-restricted gap is calculated using the formula in Supplemental Material. For both continuous and discrete distributions, we obtain good linear relations on the $\ln \Delta$ vs. $\sqrt{L}$ plots [apart from the small-$\Delta_1$ regime in Fig.~\ref{fig:exponential_gap}(a) which is limited by the numerical error in the singular value decomposition], indicating an activated scaling for both the standard energy gap $\Delta_1$ (blue dots) and the parity-restricted energy gap $\Delta_p$ (red squares), although the slope for $\Delta_1$ is generally steeper than that for $\Delta_p$. This scaling is further confirmed by the collapse of the histogram of $C\equiv L^{-1/2}\ln\Delta_p$ for different system sizes as shown in Fig.~\ref{fig:exponential_gap}(b) and (d) over 500 random realizations. Also note that our analytical upper bound (black triangles) can correctly capture the tendency of $\Delta_p$ in Fig.~\ref{fig:exponential_gap}(a) and (c), while its numerical stability allows us to push the calculation to larger system sizes with an energy gap below the machine precision.

Note that our theorem about a stretched exponential scaling of the parity-restricted energy gap holds for arbitrarily small randomness in $|J|$. This is consistent with the Harris criterion \cite{Harris_1974,Luck1993} which states that an arbitrarily weak disorder can be relevant to the critical behavior if $(d+z)\nu < 2$ where $d$ is the spatial dimension of the system, $z$ is the dynamical exponent, and $\nu$ is the critical exponent of the correlation length. Actually, previously it has been shown that for $d\le 4$ the critical point of the transverse-field Ising model is an infinite-disorder fixed point \cite{PhysRevB.83.174207,PhysRevB.61.1160}, where the random fluctuation gets enlarged as the system is coarse grained. Here we study the implication of this result for the parity-restricted energy gap on solving quantum annealing problems with an analogue quantum computer. We consider an RTIM with its exchange energies drawn randomly and independently from a Gaussian distribution $\mathcal{N}(\mu=1,\sigma^2)$. Ideally with $\sigma=0$, the homogeneous transverse-field Ising model has its parity-restricted energy gap decaying polynomially with the system size at the critical point $h_c=1$. Now for nonzero $\sigma$'s, we compute the upper bound $\Delta_b$ of the parity-restricted energy gap up to a large system size $L=50000$, and plot the quality of the polynomial and the stretched exponential fitting models in Fig.~\ref{fig:scaling} for (a) $\sigma=0.01$, (b) $\sigma=0.05$ and (c) $\sigma=0.1$. As expected, for small $\sigma$ and small $L$, the energy gap is better fitted by a polynomial scaling, but as the system size $L$ increases, the stretched exponential scaling will finally prevail. In Fig.~\ref{fig:scaling}(d) we plot the critical system size $L^*$ after which the activated scaling starts to dominate, and observe a power-law relation $L^*\sim \sigma^{-2.46}$. This also suggests that, as the problem size $L$ increases, the random noise or inaccuracy that can be tolerated by an analogue quantum simulator in engineering the desired Hamiltonian decreases as $L^{-0.4}$. A noise level above this value will lead to an exponential energy gap, and therefore generally an exponentially small probability to find the ground state during a polynomial annealing time, even if the ideal problem to be solved has a provable polynomial energy gap. Therefore, at large scale it may still be necessary to run the quantum annealing algorithm on an error-corrected quantum computer, although generally it is expected that quantum annealing has some intrinsic robustness to certain types of errors.

\begin{figure}[!tbp]
   \includegraphics[width=\linewidth]{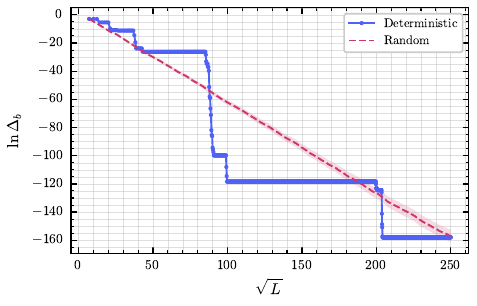}
   \caption{A deterministic sequence with a parity-restricted energy gap showing an activated scaling. Here we consider a deterministic sequence $J_{x,x+1}=\sin(x \sin x)$ ($x=1,\,2,\,\cdots,\,L-1$), and plot the upper bound $\Delta_b$ of the parity-restricted energy gap. Similar to a single random realization of the RTIM, multiple plateaus are observed along with the overall activated scaling as the system size $L$ increases. In comparison, we also plot $\Delta_b$ for an RTIM whose exchange energies are sampled from a uniform distribution between $[-1,\,1]$ (red dashed). The data points for the random distribution are the geometric mean over $500$ realizations, with the shaded area representing 99\% confidence intervals.
   \label{fig:deterministic}}
\end{figure}

Finally, we note that although the above results are obtained for random Ising couplings following i.i.d. distributions, it is also possible to observe similar activated scaling for Ising couplings generated by a deterministic sequence. Previously it has been shown that a stretched exponential scaling for the standard energy gap $\Delta_1$ can occur when the sequence shows ``unbounded fluctuation'', and several sequences have been constructed by repetitively applying some substitution rules which corresponds to a self-similar structure \cite{Luck1993,Igloi1998}. Here we construct a different type of deterministic sequence $J_{x,x+1}=\sin(x \sin x)$ ($x=1,\,2,\,\cdots$). The physical motivation for this sequence comes from performing quantum gates in series in a digital trapped-ion quantum simulator. For example, we may have a linear chain of ions with a spacing of $d$, and we may choose to sequentially perform two-qubit entangling gates between nearest neighbors from one side to the other to simulate the Ising interaction through Trotter decomposition. Now imagine that there is a small change in the axial trapping potential, which leads to a $\delta$ shift in the ion spacing. Then the position drift of the $n$-th ion will be $n\delta$ with respect to the focused addressing laser beam. Furthermore, suppose the change in the axial potential itself is oscillating, which may come from some low-frequency electrical noise. Then we may have $\delta\sim \sin(\omega t) \sim \sin(c n)$ as we address the ions sequentially. Finally, the position drift with respect to the addressing laser beam can cause a change in the Rabi rate and hence an error in the two-qubit entanglement phase, causing the $\sin(x \sin x)$-type sequence in the effective Ising coupling. In Fig.~\ref{fig:deterministic} we plot the upper bound $\Delta_b$ of the parity-restricted energy gap for this deterministic sequence. Different from the smooth curves for random distributions after averaging over a large number of realizations, here we observe several plateaus as the system size $L$ increases, which actually also occurs when examining individual realizations of random distributions. Nevertheless, the parity-restricted energy gap for this deterministic sequence still follows a similar tendency as that of an RTIM whose Ising couplings are drawn uniformly between $[-1,\,1]$, and we observe that the parity-restricted gaps for both models decay to the exponentially small value of $10^{-69}$ at the system size of $L=62500$.

\emph{Conclusion}. To sum up, we have proven the super-polynomial energy gap in the parity symmetric subspace for a 1D RTIM with general randomness. This suggests that the standard annealing path starting from the easily prepared ground state of the transverse field cannot efficiently lead to the ground state of the desired final spin glass Hamiltonian. Note that in the special cases of 1D nearest-neighbor or 2D planar graphs, there exist polynomial classical algorithms to solve the ground state of the spin glass \cite{Barahona_1982,plummer1986matching}, so our result does not exclude the possibility that efficient quantum annealing paths can still be designed starting from some easily prepared ground states of different Hamiltonians. However, such designs will likely need to take into account the special structure of the 1D or 2D systems, because it is well-known that solving the ground state of a general spin glass is NP-hard \cite{Barahona_1982,10.1145/335305.335316,10.3389/fphy.2014.00005} and is not likely to be efficient even on a universal quantum computer.

\begin{acknowledgments}
This work was supported by the Quantum Science and Technology-National Science and Technology Major Project (2021ZD0301601), the Shanghai Qi Zhi Institute, the Tsinghua University Initiative Scientific Research Program, and the Ministry of Education of China. L.-M. D. acknowledges in addition support from the New Cornerstone Science Foundation through the New Cornerstone Investigator Program. Y.-K. W. acknowledges in addition support from Tsinghua University Dushi program.
\end{acknowledgments}


\begin{thebibliography}{41}%
\makeatletter
\providecommand \@ifxundefined [1]{%
 \@ifx{#1\undefined}
}%
\providecommand \@ifnum [1]{%
 \ifnum #1\expandafter \@firstoftwo
 \else \expandafter \@secondoftwo
 \fi
}%
\providecommand \@ifx [1]{%
 \ifx #1\expandafter \@firstoftwo
 \else \expandafter \@secondoftwo
 \fi
}%
\providecommand \natexlab [1]{#1}%
\providecommand \enquote  [1]{``#1''}%
\providecommand \bibnamefont  [1]{#1}%
\providecommand \bibfnamefont [1]{#1}%
\providecommand \citenamefont [1]{#1}%
\providecommand \href@noop [0]{\@secondoftwo}%
\providecommand \href [0]{\begingroup \@sanitize@url \@href}%
\providecommand \@href[1]{\@@startlink{#1}\@@href}%
\providecommand \@@href[1]{\endgroup#1\@@endlink}%
\providecommand \@sanitize@url [0]{\catcode `\\12\catcode `\$12\catcode
  `\&12\catcode `\#12\catcode `\^12\catcode `\_12\catcode `\%12\relax}%
\providecommand \@@startlink[1]{}%
\providecommand \@@endlink[0]{}%
\providecommand \url  [0]{\begingroup\@sanitize@url \@url }%
\providecommand \@url [1]{\endgroup\@href {#1}{\urlprefix }}%
\providecommand \urlprefix  [0]{URL }%
\providecommand \Eprint [0]{\href }%
\providecommand \doibase [0]{http://dx.doi.org/}%
\providecommand \selectlanguage [0]{\@gobble}%
\providecommand \bibinfo  [0]{\@secondoftwo}%
\providecommand \bibfield  [0]{\@secondoftwo}%
\providecommand \translation [1]{[#1]}%
\providecommand \BibitemOpen [0]{}%
\providecommand \bibitemStop [0]{}%
\providecommand \bibitemNoStop [0]{.\EOS\space}%
\providecommand \EOS [0]{\spacefactor3000\relax}%
\providecommand \BibitemShut  [1]{\csname bibitem#1\endcsname}%
\let\auto@bib@innerbib\@empty
\bibitem [{\citenamefont {Lenz}(1920)}]{LenzBeitragZV}%
  \BibitemOpen
  \bibfield  {author} {\bibinfo {author} {\bibfnamefont {W.}~\bibnamefont
  {Lenz}},\ }\bibfield  {title} {\enquote {\bibinfo {title} {Beitrag zum
  verst{\"a}ndnis der magnetischen erscheinungen in festen k{\"o}rpern},}\
  }\href {https://api.semanticscholar.org/CorpusID:125879641} {\bibfield
  {journal} {\bibinfo  {journal} {European Physical Journal A}\ }\textbf
  {\bibinfo {volume} {21}},\ \bibinfo {pages} {613--615} (\bibinfo {year}
  {1920})}\BibitemShut {NoStop}%
\bibitem [{\citenamefont {Ising}(1925)}]{Ising1925}%
  \BibitemOpen
  \bibfield  {author} {\bibinfo {author} {\bibfnamefont {Ernst}\ \bibnamefont
  {Ising}},\ }\bibfield  {title} {\enquote {\bibinfo {title} {Beitrag zur
  theorie des ferromagnetismus},}\ }\href {\doibase 10.1007/BF02980577}
  {\bibfield  {journal} {\bibinfo  {journal} {Zeitschrift f{\"u}r Physik}\
  }\textbf {\bibinfo {volume} {31}},\ \bibinfo {pages} {253--258} (\bibinfo
  {year} {1925})}\BibitemShut {NoStop}%
\bibitem [{\citenamefont {Onsager}(1944)}]{PhysRev.65.117}%
  \BibitemOpen
  \bibfield  {author} {\bibinfo {author} {\bibfnamefont {Lars}\ \bibnamefont
  {Onsager}},\ }\bibfield  {title} {\enquote {\bibinfo {title} {Crystal
  statistics. i. a two-dimensional model with an order-disorder transition},}\
  }\href {\doibase 10.1103/PhysRev.65.117} {\bibfield  {journal} {\bibinfo
  {journal} {Phys. Rev.}\ }\textbf {\bibinfo {volume} {65}},\ \bibinfo {pages}
  {117--149} (\bibinfo {year} {1944})}\BibitemShut {NoStop}%
\bibitem [{\citenamefont {Sachdev}(2000)}]{sachdev2000quantum}%
  \BibitemOpen
  \bibfield  {author} {\bibinfo {author} {\bibfnamefont {Subir}\ \bibnamefont
  {Sachdev}},\ }\href {\doibase 10.1017/CBO9780511622540} {\emph {\bibinfo
  {title} {Quantum Phase Transitions}}}\ (\bibinfo  {publisher} {{Cambridge
  University Press}},\ \bibinfo {address} {{Cambridge}},\ \bibinfo {year}
  {2000})\BibitemShut {NoStop}%
\bibitem [{\citenamefont {Ackley}\ \emph {et~al.}(1985)\citenamefont {Ackley},
  \citenamefont {Hinton},\ and\ \citenamefont {Sejnowski}}]{ACKLEY1985147}%
  \BibitemOpen
  \bibfield  {author} {\bibinfo {author} {\bibfnamefont {David~H.}\
  \bibnamefont {Ackley}}, \bibinfo {author} {\bibfnamefont {Geoffrey~E.}\
  \bibnamefont {Hinton}}, \ and\ \bibinfo {author} {\bibfnamefont
  {Terrence~J.}\ \bibnamefont {Sejnowski}},\ }\bibfield  {title} {\enquote
  {\bibinfo {title} {A learning algorithm for boltzmann machines},}\ }\href
  {\doibase https://doi.org/10.1016/S0364-0213(85)80012-4} {\bibfield
  {journal} {\bibinfo  {journal} {Cognitive Science}\ }\textbf {\bibinfo
  {volume} {9}},\ \bibinfo {pages} {147--169} (\bibinfo {year}
  {1985})}\BibitemShut {NoStop}%
\bibitem [{\citenamefont {Kirkpatrick}\ \emph {et~al.}(1983)\citenamefont
  {Kirkpatrick}, \citenamefont {Gelatt},\ and\ \citenamefont
  {Vecchi}}]{kirkpatrick1983optimization}%
  \BibitemOpen
  \bibfield  {author} {\bibinfo {author} {\bibfnamefont {S.}~\bibnamefont
  {Kirkpatrick}}, \bibinfo {author} {\bibfnamefont {C.~D.}\ \bibnamefont
  {Gelatt}}, \ and\ \bibinfo {author} {\bibfnamefont {M.~P.}\ \bibnamefont
  {Vecchi}},\ }\bibfield  {title} {\enquote {\bibinfo {title} {Optimization by
  simulated annealing},}\ }\href {\doibase 10.1126/science.220.4598.671}
  {\bibfield  {journal} {\bibinfo  {journal} {Science}\ }\textbf {\bibinfo
  {volume} {220}},\ \bibinfo {pages} {671--680} (\bibinfo {year}
  {1983})}\BibitemShut {NoStop}%
\bibitem [{\citenamefont {Lucas}(2014)}]{10.3389/fphy.2014.00005}%
  \BibitemOpen
  \bibfield  {author} {\bibinfo {author} {\bibfnamefont {Andrew}\ \bibnamefont
  {Lucas}},\ }\bibfield  {title} {\enquote {\bibinfo {title} {Ising
  formulations of many np problems},}\ }\href {\doibase
  10.3389/fphy.2014.00005} {\bibfield  {journal} {\bibinfo  {journal}
  {Frontiers in Physics}\ }\textbf {\bibinfo {volume} {2}},\ \bibinfo {pages}
  {5} (\bibinfo {year} {2014})}\BibitemShut {NoStop}%
\bibitem [{\citenamefont {Binder}\ and\ \citenamefont
  {Young}(1986)}]{RevModPhys.58.801}%
  \BibitemOpen
  \bibfield  {author} {\bibinfo {author} {\bibfnamefont {K.}~\bibnamefont
  {Binder}}\ and\ \bibinfo {author} {\bibfnamefont {A.~P.}\ \bibnamefont
  {Young}},\ }\bibfield  {title} {\enquote {\bibinfo {title} {Spin glasses:
  Experimental facts, theoretical concepts, and open questions},}\ }\href
  {\doibase 10.1103/RevModPhys.58.801} {\bibfield  {journal} {\bibinfo
  {journal} {Rev. Mod. Phys.}\ }\textbf {\bibinfo {volume} {58}},\ \bibinfo
  {pages} {801--976} (\bibinfo {year} {1986})}\BibitemShut {NoStop}%
\bibitem [{\citenamefont {Sherrington}\ and\ \citenamefont
  {Kirkpatrick}(2025)}]{Sherrington2025}%
  \BibitemOpen
  \bibfield  {author} {\bibinfo {author} {\bibfnamefont {David}\ \bibnamefont
  {Sherrington}}\ and\ \bibinfo {author} {\bibfnamefont {Scott}\ \bibnamefont
  {Kirkpatrick}},\ }\bibfield  {title} {\enquote {\bibinfo {title} {50 years of
  spin glass theory},}\ }\href {\doibase 10.1038/s42254-025-00871-z} {\bibfield
   {journal} {\bibinfo  {journal} {Nature Reviews Physics}\ }\textbf {\bibinfo
  {volume} {7}},\ \bibinfo {pages} {528--529} (\bibinfo {year}
  {2025})}\BibitemShut {NoStop}%
\bibitem [{\citenamefont {Barahona}(1982)}]{Barahona_1982}%
  \BibitemOpen
  \bibfield  {author} {\bibinfo {author} {\bibfnamefont {F}~\bibnamefont
  {Barahona}},\ }\bibfield  {title} {\enquote {\bibinfo {title} {On the
  computational complexity of ising spin glass models},}\ }\href {\doibase
  10.1088/0305-4470/15/10/028} {\bibfield  {journal} {\bibinfo  {journal}
  {Journal of Physics A: Mathematical and General}\ }\textbf {\bibinfo {volume}
  {15}},\ \bibinfo {pages} {3241} (\bibinfo {year} {1982})}\BibitemShut
  {NoStop}%
\bibitem [{\citenamefont {Istrail}(2000)}]{10.1145/335305.335316}%
  \BibitemOpen
  \bibfield  {author} {\bibinfo {author} {\bibfnamefont {Sorin}\ \bibnamefont
  {Istrail}},\ }\bibfield  {title} {\enquote {\bibinfo {title} {Statistical
  mechanics, three-dimensionality and np-completeness: I. universality of
  intracatability for the partition function of the ising model across
  non-planar surfaces (extended abstract)},}\ }in\ \href {\doibase
  10.1145/335305.335316} {\emph {\bibinfo {booktitle} {Proceedings of the
  Thirty-Second Annual ACM Symposium on Theory of Computing}}},\ \bibinfo
  {series and number} {STOC '00}\ (\bibinfo  {publisher} {Association for
  Computing Machinery},\ \bibinfo {address} {New York, NY, USA},\ \bibinfo
  {year} {2000})\ p.\ \bibinfo {pages} {87–96}\BibitemShut {NoStop}%
\bibitem [{\citenamefont {Kadowaki}\ and\ \citenamefont
  {Nishimori}(1998)}]{PhysRevE.58.5355}%
  \BibitemOpen
  \bibfield  {author} {\bibinfo {author} {\bibfnamefont {Tadashi}\ \bibnamefont
  {Kadowaki}}\ and\ \bibinfo {author} {\bibfnamefont {Hidetoshi}\ \bibnamefont
  {Nishimori}},\ }\bibfield  {title} {\enquote {\bibinfo {title} {Quantum
  annealing in the transverse ising model},}\ }\href {\doibase
  10.1103/PhysRevE.58.5355} {\bibfield  {journal} {\bibinfo  {journal} {Phys.
  Rev. E}\ }\textbf {\bibinfo {volume} {58}},\ \bibinfo {pages} {5355--5363}
  (\bibinfo {year} {1998})}\BibitemShut {NoStop}%
\bibitem [{\citenamefont {Brooke}\ \emph {et~al.}(1999)\citenamefont {Brooke},
  \citenamefont {Bitko}, \citenamefont {Rosenbaum},\ and\ \citenamefont
  {Aeppli}}]{doi:10.1126/science.284.5415.779}%
  \BibitemOpen
  \bibfield  {author} {\bibinfo {author} {\bibfnamefont {J.}~\bibnamefont
  {Brooke}}, \bibinfo {author} {\bibfnamefont {D.}~\bibnamefont {Bitko}},
  \bibinfo {author} {\bibfnamefont {T.~F.}\ \bibnamefont {Rosenbaum}}, \ and\
  \bibinfo {author} {\bibfnamefont {G.}~\bibnamefont {Aeppli}},\ }\bibfield
  {title} {\enquote {\bibinfo {title} {Quantum annealing of a disordered
  magnet},}\ }\href {\doibase 10.1126/science.284.5415.779} {\bibfield
  {journal} {\bibinfo  {journal} {Science}\ }\textbf {\bibinfo {volume}
  {284}},\ \bibinfo {pages} {779--781} (\bibinfo {year} {1999})}\BibitemShut
  {NoStop}%
\bibitem [{\citenamefont {Das}\ and\ \citenamefont
  {Chakrabarti}(2008)}]{RevModPhys.80.1061}%
  \BibitemOpen
  \bibfield  {author} {\bibinfo {author} {\bibfnamefont {Arnab}\ \bibnamefont
  {Das}}\ and\ \bibinfo {author} {\bibfnamefont {Bikas~K.}\ \bibnamefont
  {Chakrabarti}},\ }\bibfield  {title} {\enquote {\bibinfo {title} {Colloquium:
  Quantum annealing and analog quantum computation},}\ }\href {\doibase
  10.1103/RevModPhys.80.1061} {\bibfield  {journal} {\bibinfo  {journal} {Rev.
  Mod. Phys.}\ }\textbf {\bibinfo {volume} {80}},\ \bibinfo {pages}
  {1061--1081} (\bibinfo {year} {2008})}\BibitemShut {NoStop}%
\bibitem [{\citenamefont {Yarkoni}\ \emph {et~al.}(2022)\citenamefont
  {Yarkoni}, \citenamefont {Raponi}, \citenamefont {Bäck},\ and\ \citenamefont
  {Schmitt}}]{Yarkoni_2022}%
  \BibitemOpen
  \bibfield  {author} {\bibinfo {author} {\bibfnamefont {Sheir}\ \bibnamefont
  {Yarkoni}}, \bibinfo {author} {\bibfnamefont {Elena}\ \bibnamefont {Raponi}},
  \bibinfo {author} {\bibfnamefont {Thomas}\ \bibnamefont {Bäck}}, \ and\
  \bibinfo {author} {\bibfnamefont {Sebastian}\ \bibnamefont {Schmitt}},\
  }\bibfield  {title} {\enquote {\bibinfo {title} {Quantum annealing for
  industry applications: introduction and review},}\ }\href {\doibase
  10.1088/1361-6633/ac8c54} {\bibfield  {journal} {\bibinfo  {journal} {Reports
  on Progress in Physics}\ }\textbf {\bibinfo {volume} {85}},\ \bibinfo {pages}
  {104001} (\bibinfo {year} {2022})}\BibitemShut {NoStop}%
\bibitem [{\citenamefont {Farhi}\ \emph {et~al.}(2014)\citenamefont {Farhi},
  \citenamefont {Goldstone},\ and\ \citenamefont
  {Gutmann}}]{farhi2014quantumapproximateoptimizationalgorithm}%
  \BibitemOpen
  \bibfield  {author} {\bibinfo {author} {\bibfnamefont {Edward}\ \bibnamefont
  {Farhi}}, \bibinfo {author} {\bibfnamefont {Jeffrey}\ \bibnamefont
  {Goldstone}}, \ and\ \bibinfo {author} {\bibfnamefont {Sam}\ \bibnamefont
  {Gutmann}},\ }\bibfield  {title} {\enquote {\bibinfo {title} {A quantum
  approximate optimization algorithm},}\ }\href
  {https://arxiv.org/abs/1411.4028} {\bibfield  {journal} {\bibinfo  {journal}
  {arXiv:1411.4028}\ } (\bibinfo {year} {2014})}\BibitemShut {NoStop}%
\bibitem [{\citenamefont {Zhou}\ \emph {et~al.}(2020)\citenamefont {Zhou},
  \citenamefont {Wang}, \citenamefont {Choi}, \citenamefont {Pichler},\ and\
  \citenamefont {Lukin}}]{PhysRevX.10.021067}%
  \BibitemOpen
  \bibfield  {author} {\bibinfo {author} {\bibfnamefont {Leo}\ \bibnamefont
  {Zhou}}, \bibinfo {author} {\bibfnamefont {Sheng-Tao}\ \bibnamefont {Wang}},
  \bibinfo {author} {\bibfnamefont {Soonwon}\ \bibnamefont {Choi}}, \bibinfo
  {author} {\bibfnamefont {Hannes}\ \bibnamefont {Pichler}}, \ and\ \bibinfo
  {author} {\bibfnamefont {Mikhail~D.}\ \bibnamefont {Lukin}},\ }\bibfield
  {title} {\enquote {\bibinfo {title} {Quantum approximate optimization
  algorithm: Performance, mechanism, and implementation on near-term
  devices},}\ }\href {\doibase 10.1103/PhysRevX.10.021067} {\bibfield
  {journal} {\bibinfo  {journal} {Phys. Rev. X}\ }\textbf {\bibinfo {volume}
  {10}},\ \bibinfo {pages} {021067} (\bibinfo {year} {2020})}\BibitemShut
  {NoStop}%
\bibitem [{\citenamefont {Albash}\ and\ \citenamefont
  {Lidar}(2018)}]{RevModPhys.90.015002}%
  \BibitemOpen
  \bibfield  {author} {\bibinfo {author} {\bibfnamefont {Tameem}\ \bibnamefont
  {Albash}}\ and\ \bibinfo {author} {\bibfnamefont {Daniel~A.}\ \bibnamefont
  {Lidar}},\ }\bibfield  {title} {\enquote {\bibinfo {title} {Adiabatic quantum
  computation},}\ }\href {\doibase 10.1103/RevModPhys.90.015002} {\bibfield
  {journal} {\bibinfo  {journal} {Rev. Mod. Phys.}\ }\textbf {\bibinfo {volume}
  {90}},\ \bibinfo {pages} {015002} (\bibinfo {year} {2018})}\BibitemShut
  {NoStop}%
\bibitem [{\citenamefont {Caneva}\ \emph {et~al.}(2011)\citenamefont {Caneva},
  \citenamefont {Calarco}, \citenamefont {Fazio}, \citenamefont {Santoro},\
  and\ \citenamefont {Montangero}}]{PhysRevA.84.012312}%
  \BibitemOpen
  \bibfield  {author} {\bibinfo {author} {\bibfnamefont {Tommaso}\ \bibnamefont
  {Caneva}}, \bibinfo {author} {\bibfnamefont {Tommaso}\ \bibnamefont
  {Calarco}}, \bibinfo {author} {\bibfnamefont {Rosario}\ \bibnamefont
  {Fazio}}, \bibinfo {author} {\bibfnamefont {Giuseppe~E.}\ \bibnamefont
  {Santoro}}, \ and\ \bibinfo {author} {\bibfnamefont {Simone}\ \bibnamefont
  {Montangero}},\ }\bibfield  {title} {\enquote {\bibinfo {title} {Speeding up
  critical system dynamics through optimized evolution},}\ }\href {\doibase
  10.1103/PhysRevA.84.012312} {\bibfield  {journal} {\bibinfo  {journal} {Phys.
  Rev. A}\ }\textbf {\bibinfo {volume} {84}},\ \bibinfo {pages} {012312}
  (\bibinfo {year} {2011})}\BibitemShut {NoStop}%
\bibitem [{\citenamefont {Boixo}\ \emph {et~al.}(2010)\citenamefont {Boixo},
  \citenamefont {Knill},\ and\ \citenamefont
  {Somma}}]{boixo2010fastquantumalgorithmstraversing}%
  \BibitemOpen
  \bibfield  {author} {\bibinfo {author} {\bibfnamefont {S.}~\bibnamefont
  {Boixo}}, \bibinfo {author} {\bibfnamefont {E.}~\bibnamefont {Knill}}, \ and\
  \bibinfo {author} {\bibfnamefont {R.~D.}\ \bibnamefont {Somma}},\ }\bibfield
  {title} {\enquote {\bibinfo {title} {Fast quantum algorithms for traversing
  paths of eigenstates},}\ }\href {https://arxiv.org/abs/1005.3034} {\bibfield
  {journal} {\bibinfo  {journal} {arXiv:1005.3034}\ } (\bibinfo {year}
  {2010})}\BibitemShut {NoStop}%
\bibitem [{\citenamefont {Fisher}(1992)}]{PhysRevLett.69.534}%
  \BibitemOpen
  \bibfield  {author} {\bibinfo {author} {\bibfnamefont {Daniel~S.}\
  \bibnamefont {Fisher}},\ }\bibfield  {title} {\enquote {\bibinfo {title}
  {Random transverse field ising spin chains},}\ }\href {\doibase
  10.1103/PhysRevLett.69.534} {\bibfield  {journal} {\bibinfo  {journal} {Phys.
  Rev. Lett.}\ }\textbf {\bibinfo {volume} {69}},\ \bibinfo {pages} {534--537}
  (\bibinfo {year} {1992})}\BibitemShut {NoStop}%
\bibitem [{\citenamefont {Fisher}(1995)}]{PhysRevB.51.6411}%
  \BibitemOpen
  \bibfield  {author} {\bibinfo {author} {\bibfnamefont {Daniel~S.}\
  \bibnamefont {Fisher}},\ }\bibfield  {title} {\enquote {\bibinfo {title}
  {Critical behavior of random transverse-field ising spin chains},}\ }\href
  {\doibase 10.1103/PhysRevB.51.6411} {\bibfield  {journal} {\bibinfo
  {journal} {Phys. Rev. B}\ }\textbf {\bibinfo {volume} {51}},\ \bibinfo
  {pages} {6411--6461} (\bibinfo {year} {1995})}\BibitemShut {NoStop}%
\bibitem [{\citenamefont {Young}\ and\ \citenamefont
  {Rieger}(1996)}]{PhysRevB.53.8486}%
  \BibitemOpen
  \bibfield  {author} {\bibinfo {author} {\bibfnamefont {A.~P.}\ \bibnamefont
  {Young}}\ and\ \bibinfo {author} {\bibfnamefont {H.}~\bibnamefont {Rieger}},\
  }\bibfield  {title} {\enquote {\bibinfo {title} {Numerical study of the
  random transverse-field ising spin chain},}\ }\href {\doibase
  10.1103/PhysRevB.53.8486} {\bibfield  {journal} {\bibinfo  {journal} {Phys.
  Rev. B}\ }\textbf {\bibinfo {volume} {53}},\ \bibinfo {pages} {8486--8498}
  (\bibinfo {year} {1996})}\BibitemShut {NoStop}%
\bibitem [{\citenamefont {Edwards}\ and\ \citenamefont
  {Anderson}(1975)}]{Edwards_1975}%
  \BibitemOpen
  \bibfield  {author} {\bibinfo {author} {\bibfnamefont {S~F}\ \bibnamefont
  {Edwards}}\ and\ \bibinfo {author} {\bibfnamefont {P~W}\ \bibnamefont
  {Anderson}},\ }\bibfield  {title} {\enquote {\bibinfo {title} {Theory of spin
  glasses},}\ }\href {\doibase 10.1088/0305-4608/5/5/017} {\bibfield  {journal}
  {\bibinfo  {journal} {Journal of Physics F: Metal Physics}\ }\textbf
  {\bibinfo {volume} {5}},\ \bibinfo {pages} {965} (\bibinfo {year}
  {1975})}\BibitemShut {NoStop}%
\bibitem [{\citenamefont {Vannimenus}\ and\ \citenamefont
  {Toulouse}(1977)}]{Vannimenus_1977}%
  \BibitemOpen
  \bibfield  {author} {\bibinfo {author} {\bibfnamefont {J}~\bibnamefont
  {Vannimenus}}\ and\ \bibinfo {author} {\bibfnamefont {G}~\bibnamefont
  {Toulouse}},\ }\bibfield  {title} {\enquote {\bibinfo {title} {Theory of the
  frustration effect. ii. ising spins on a square lattice},}\ }\href {\doibase
  10.1088/0022-3719/10/18/008} {\bibfield  {journal} {\bibinfo  {journal}
  {Journal of Physics C: Solid State Physics}\ }\textbf {\bibinfo {volume}
  {10}},\ \bibinfo {pages} {L537} (\bibinfo {year} {1977})}\BibitemShut
  {NoStop}%
\bibitem [{\citenamefont {Fisher}\ and\ \citenamefont
  {Young}(1998)}]{PhysRevB.58.9131}%
  \BibitemOpen
  \bibfield  {author} {\bibinfo {author} {\bibfnamefont {Daniel~S.}\
  \bibnamefont {Fisher}}\ and\ \bibinfo {author} {\bibfnamefont {A.~P.}\
  \bibnamefont {Young}},\ }\bibfield  {title} {\enquote {\bibinfo {title}
  {Distributions of gaps and end-to-end correlations in random transverse-field
  ising spin chains},}\ }\href {\doibase 10.1103/PhysRevB.58.9131} {\bibfield
  {journal} {\bibinfo  {journal} {Phys. Rev. B}\ }\textbf {\bibinfo {volume}
  {58}},\ \bibinfo {pages} {9131--9141} (\bibinfo {year} {1998})}\BibitemShut
  {NoStop}%
\bibitem [{\citenamefont {Fisher}(1999)}]{FISHER1999222}%
  \BibitemOpen
  \bibfield  {author} {\bibinfo {author} {\bibfnamefont {Daniel~S.}\
  \bibnamefont {Fisher}},\ }\bibfield  {title} {\enquote {\bibinfo {title}
  {Phase transitions and singularities in random quantum systems},}\ }\href
  {\doibase https://doi.org/10.1016/S0378-4371(98)00498-1} {\bibfield
  {journal} {\bibinfo  {journal} {Physica A: Statistical Mechanics and its
  Applications}\ }\textbf {\bibinfo {volume} {263}},\ \bibinfo {pages}
  {222--233} (\bibinfo {year} {1999})},\ \bibinfo {note} {proceedings of the
  20th IUPAP International Conference on Statistical Physics}\BibitemShut
  {NoStop}%
\bibitem [{\citenamefont {Kov\'acs}\ and\ \citenamefont
  {Igl\'oi}(2010)}]{PhysRevB.82.054437}%
  \BibitemOpen
  \bibfield  {author} {\bibinfo {author} {\bibfnamefont {Istv\'an~A.}\
  \bibnamefont {Kov\'acs}}\ and\ \bibinfo {author} {\bibfnamefont {Ferenc}\
  \bibnamefont {Igl\'oi}},\ }\bibfield  {title} {\enquote {\bibinfo {title}
  {Renormalization group study of the two-dimensional random transverse-field
  ising model},}\ }\href {\doibase 10.1103/PhysRevB.82.054437} {\bibfield
  {journal} {\bibinfo  {journal} {Phys. Rev. B}\ }\textbf {\bibinfo {volume}
  {82}},\ \bibinfo {pages} {054437} (\bibinfo {year} {2010})}\BibitemShut
  {NoStop}%
\bibitem [{\citenamefont {Pich}\ \emph {et~al.}(1998)\citenamefont {Pich},
  \citenamefont {Young}, \citenamefont {Rieger},\ and\ \citenamefont
  {Kawashima}}]{PhysRevLett.81.5916}%
  \BibitemOpen
  \bibfield  {author} {\bibinfo {author} {\bibfnamefont {C.}~\bibnamefont
  {Pich}}, \bibinfo {author} {\bibfnamefont {A.~P.}\ \bibnamefont {Young}},
  \bibinfo {author} {\bibfnamefont {H.}~\bibnamefont {Rieger}}, \ and\ \bibinfo
  {author} {\bibfnamefont {N.}~\bibnamefont {Kawashima}},\ }\bibfield  {title}
  {\enquote {\bibinfo {title} {Critical behavior and griffiths-mccoy
  singularities in the two-dimensional random quantum ising ferromagnet},}\
  }\href {\doibase 10.1103/PhysRevLett.81.5916} {\bibfield  {journal} {\bibinfo
   {journal} {Phys. Rev. Lett.}\ }\textbf {\bibinfo {volume} {81}},\ \bibinfo
  {pages} {5916--5919} (\bibinfo {year} {1998})}\BibitemShut {NoStop}%
\bibitem [{\citenamefont {Motrunich}\ \emph {et~al.}(2000)\citenamefont
  {Motrunich}, \citenamefont {Mau}, \citenamefont {Huse},\ and\ \citenamefont
  {Fisher}}]{PhysRevB.61.1160}%
  \BibitemOpen
  \bibfield  {author} {\bibinfo {author} {\bibfnamefont {Olexei}\ \bibnamefont
  {Motrunich}}, \bibinfo {author} {\bibfnamefont {Siun-Chuon}\ \bibnamefont
  {Mau}}, \bibinfo {author} {\bibfnamefont {David~A.}\ \bibnamefont {Huse}}, \
  and\ \bibinfo {author} {\bibfnamefont {Daniel~S.}\ \bibnamefont {Fisher}},\
  }\bibfield  {title} {\enquote {\bibinfo {title} {Infinite-randomness quantum
  ising critical fixed points},}\ }\href {\doibase 10.1103/PhysRevB.61.1160}
  {\bibfield  {journal} {\bibinfo  {journal} {Phys. Rev. B}\ }\textbf {\bibinfo
  {volume} {61}},\ \bibinfo {pages} {1160--1172} (\bibinfo {year}
  {2000})}\BibitemShut {NoStop}%
\bibitem [{\citenamefont {Kov\'acs}\ and\ \citenamefont
  {Igl\'oi}(2011)}]{PhysRevB.83.174207}%
  \BibitemOpen
  \bibfield  {author} {\bibinfo {author} {\bibfnamefont {Istv\'an~A.}\
  \bibnamefont {Kov\'acs}}\ and\ \bibinfo {author} {\bibfnamefont {Ferenc}\
  \bibnamefont {Igl\'oi}},\ }\bibfield  {title} {\enquote {\bibinfo {title}
  {Infinite-disorder scaling of random quantum magnets in three and higher
  dimensions},}\ }\href {\doibase 10.1103/PhysRevB.83.174207} {\bibfield
  {journal} {\bibinfo  {journal} {Phys. Rev. B}\ }\textbf {\bibinfo {volume}
  {83}},\ \bibinfo {pages} {174207} (\bibinfo {year} {2011})}\BibitemShut
  {NoStop}%
\bibitem [{\citenamefont {Lov{\'a}sz}\ and\ \citenamefont
  {Plummer}(1986)}]{plummer1986matching}%
  \BibitemOpen
  \bibfield  {author} {\bibinfo {author} {\bibfnamefont {L{\'a}szl{\'o}}\
  \bibnamefont {Lov{\'a}sz}}\ and\ \bibinfo {author} {\bibfnamefont
  {Michael~D}\ \bibnamefont {Plummer}},\ }\href@noop {} {\emph {\bibinfo
  {title} {Matching theory}}},\ \bibinfo {series} {North-Holland Mathematics
  Studies}, Vol.\ \bibinfo {volume} {121}\ (\bibinfo  {publisher}
  {North-Holland},\ \bibinfo {address} {Amsterdam},\ \bibinfo {year}
  {1986})\BibitemShut {NoStop}%
\bibitem [{\citenamefont {Bernaschi}\ \emph {et~al.}(2024)\citenamefont
  {Bernaschi}, \citenamefont {Gonz{\'a}lez-Adalid~Pemart{\'i}n}, \citenamefont
  {Mart{\'i}n-Mayor},\ and\ \citenamefont {Parisi}}]{Bernaschi2024}%
  \BibitemOpen
  \bibfield  {author} {\bibinfo {author} {\bibfnamefont {Massimo}\ \bibnamefont
  {Bernaschi}}, \bibinfo {author} {\bibfnamefont {Isidoro}\ \bibnamefont
  {Gonz{\'a}lez-Adalid~Pemart{\'i}n}}, \bibinfo {author} {\bibfnamefont
  {V{\'i}ctor}\ \bibnamefont {Mart{\'i}n-Mayor}}, \ and\ \bibinfo {author}
  {\bibfnamefont {Giorgio}\ \bibnamefont {Parisi}},\ }\bibfield  {title}
  {\enquote {\bibinfo {title} {The quantum transition of the two-dimensional
  ising spin glass},}\ }\href {\doibase 10.1038/s41586-024-07647-y} {\bibfield
  {journal} {\bibinfo  {journal} {Nature}\ }\textbf {\bibinfo {volume} {631}},\
  \bibinfo {pages} {749--754} (\bibinfo {year} {2024})}\BibitemShut {NoStop}%
\bibitem [{\citenamefont {Lieb}\ \emph {et~al.}(1961)\citenamefont {Lieb},
  \citenamefont {Schultz},\ and\ \citenamefont {Mattis}}]{LIEB1961407}%
  \BibitemOpen
  \bibfield  {author} {\bibinfo {author} {\bibfnamefont {Elliott}\ \bibnamefont
  {Lieb}}, \bibinfo {author} {\bibfnamefont {Theodore}\ \bibnamefont
  {Schultz}}, \ and\ \bibinfo {author} {\bibfnamefont {Daniel}\ \bibnamefont
  {Mattis}},\ }\bibfield  {title} {\enquote {\bibinfo {title} {Two soluble
  models of an antiferromagnetic chain},}\ }\href {\doibase
  https://doi.org/10.1016/0003-4916(61)90115-4} {\bibfield  {journal} {\bibinfo
   {journal} {Annals of Physics}\ }\textbf {\bibinfo {volume} {16}},\ \bibinfo
  {pages} {407--466} (\bibinfo {year} {1961})}\BibitemShut {NoStop}%
\bibitem [{\citenamefont {Katsura}(1962)}]{PhysRev.127.1508}%
  \BibitemOpen
  \bibfield  {author} {\bibinfo {author} {\bibfnamefont {Shigetoshi}\
  \bibnamefont {Katsura}},\ }\bibfield  {title} {\enquote {\bibinfo {title}
  {Statistical mechanics of the anisotropic linear heisenberg model},}\ }\href
  {\doibase 10.1103/PhysRev.127.1508} {\bibfield  {journal} {\bibinfo
  {journal} {Phys. Rev.}\ }\textbf {\bibinfo {volume} {127}},\ \bibinfo {pages}
  {1508--1518} (\bibinfo {year} {1962})}\BibitemShut {NoStop}%
\bibitem [{\citenamefont {Pfeuty}(1970)}]{PFEUTY197079}%
  \BibitemOpen
  \bibfield  {author} {\bibinfo {author} {\bibfnamefont {Pierre}\ \bibnamefont
  {Pfeuty}},\ }\bibfield  {title} {\enquote {\bibinfo {title} {The
  one-dimensional ising model with a transverse field},}\ }\href {\doibase
  https://doi.org/10.1016/0003-4916(70)90270-8} {\bibfield  {journal} {\bibinfo
   {journal} {Annals of Physics}\ }\textbf {\bibinfo {volume} {57}},\ \bibinfo
  {pages} {79--90} (\bibinfo {year} {1970})}\BibitemShut {NoStop}%
\bibitem [{\citenamefont {Juh\'asz}(2022)}]{PhysRevB.106.064204}%
  \BibitemOpen
  \bibfield  {author} {\bibinfo {author} {\bibfnamefont {R\'obert}\
  \bibnamefont {Juh\'asz}},\ }\bibfield  {title} {\enquote {\bibinfo {title}
  {Exact bounds on the energy gap of transverse-field ising chains by mapping
  to random walks},}\ }\href {\doibase 10.1103/PhysRevB.106.064204} {\bibfield
  {journal} {\bibinfo  {journal} {Phys. Rev. B}\ }\textbf {\bibinfo {volume}
  {106}},\ \bibinfo {pages} {064204} (\bibinfo {year} {2022})}\BibitemShut
  {NoStop}%
\bibitem [{\citenamefont {Aldous}\ and\ \citenamefont
  {Fill}(2002)}]{aldous-fill-2014}%
  \BibitemOpen
  \bibfield  {author} {\bibinfo {author} {\bibfnamefont {David}\ \bibnamefont
  {Aldous}}\ and\ \bibinfo {author} {\bibfnamefont {James~Allen}\ \bibnamefont
  {Fill}},\ }\href@noop {} {\enquote {\bibinfo {title} {Reversible markov
  chains and random walks on graphs},}\ } (\bibinfo {year} {2002}),\ \bibinfo
  {note} {unfinished monograph, recompiled 2014, available at
  \url{http://www.stat.berkeley.edu/~aldous/RWG/book.html}}\BibitemShut
  {NoStop}%
\bibitem [{\citenamefont {Harris}(1974)}]{Harris_1974}%
  \BibitemOpen
  \bibfield  {author} {\bibinfo {author} {\bibfnamefont {A~B}\ \bibnamefont
  {Harris}},\ }\bibfield  {title} {\enquote {\bibinfo {title} {Effect of random
  defects on the critical behaviour of ising models},}\ }\href {\doibase
  10.1088/0022-3719/7/9/009} {\bibfield  {journal} {\bibinfo  {journal}
  {Journal of Physics C: Solid State Physics}\ }\textbf {\bibinfo {volume}
  {7}},\ \bibinfo {pages} {1671} (\bibinfo {year} {1974})}\BibitemShut
  {NoStop}%
\bibitem [{\citenamefont {Luck}(1993)}]{Luck1993}%
  \BibitemOpen
  \bibfield  {author} {\bibinfo {author} {\bibfnamefont {J.~M.}\ \bibnamefont
  {Luck}},\ }\bibfield  {title} {\enquote {\bibinfo {title} {Critical behavior
  of the aperiodic quantum ising chain in a transverse magnetic field},}\
  }\href {\doibase 10.1007/BF01048019} {\bibfield  {journal} {\bibinfo
  {journal} {Journal of Statistical Physics}\ }\textbf {\bibinfo {volume}
  {72}},\ \bibinfo {pages} {417--458} (\bibinfo {year} {1993})}\BibitemShut
  {NoStop}%
\bibitem [{\citenamefont {Igl{\'o}i}\ \emph {et~al.}(1998)\citenamefont
  {Igl{\'o}i}, \citenamefont {Karevski},\ and\ \citenamefont
  {Rieger}}]{Igloi1998}%
  \BibitemOpen
  \bibfield  {author} {\bibinfo {author} {\bibfnamefont {F.}~\bibnamefont
  {Igl{\'o}i}}, \bibinfo {author} {\bibfnamefont {D.}~\bibnamefont {Karevski}},
  \ and\ \bibinfo {author} {\bibfnamefont {H.}~\bibnamefont {Rieger}},\
  }\bibfield  {title} {\enquote {\bibinfo {title} {Comparative study of the
  critical behavior in one-dimensional random and aperiodic environments},}\
  }\href {\doibase 10.1007/s100510050486} {\bibfield  {journal} {\bibinfo
  {journal} {The European Physical Journal B - Condensed Matter and Complex
  Systems}\ }\textbf {\bibinfo {volume} {5}},\ \bibinfo {pages} {613--625}
  (\bibinfo {year} {1998})}\BibitemShut {NoStop}%
\end{thebibliography}
%

\end{document}


\title{Supplemental Material for
``Stretched Exponential Scaling of Parity-Restricted Energy Gaps in a Random Transverse-Field Ising Model''}

\author{G.-X. Tang}
\thanks{These authors contribute equally to this work}%
\affiliation{Center for Quantum Information, Institute for Interdisciplinary Information Sciences, Tsinghua University, Beijing 100084, PR China}
\affiliation{Shanghai Qi Zhi Institute, AI Tower, Xuhui District, Shanghai 200232, PR China}

\author{J.-Z. Zhuang}
\thanks{These authors contribute equally to this work}%
\affiliation{Center for Quantum Information, Institute for Interdisciplinary Information Sciences, Tsinghua University, Beijing 100084, PR China}
\affiliation{Shanghai Qi Zhi Institute, AI Tower, Xuhui District, Shanghai 200232, PR China}

\author{L.-M. Duan}
\email{lmduan@tsinghua.edu.cn}
\affiliation{Center for Quantum Information, Institute for Interdisciplinary Information Sciences, Tsinghua University, Beijing 100084, PR China}
\affiliation{Hefei National Laboratory, Hefei 230088, PR China}

\author{Y.-K. Wu}
\email{wyukai@mail.tsinghua.edu.cn}
\affiliation{Center for Quantum Information, Institute for Interdisciplinary Information Sciences, Tsinghua University, Beijing 100084, PR China}
\affiliation{Hefei National Laboratory, Hefei 230088, PR China}

\maketitle

\makeatletter
\renewcommand{\thefigure}{S\arabic{figure}}
\renewcommand{\thetable}{S\arabic{table}}
\renewcommand{\theequation}{S\arabic{equation}}
\makeatother

\section{Proof for activated scaling of parity-restricted energy gap}
\label{sec:proof}
In this section we provide the complete proof for the theorem in the main text, which is restated here for convenience.

\noindent\textbf{Model}. We consider a 1D transverse-field Ising model with open boundary conditions
\begin{equation}
H = -\sum_{i=1}^{L-1} J_i X_i X_{i+1} - h \sum_{i=1}^L Z_i, \label{eq:RTIM}
\end{equation}
whose exchange energies $J_i$'s are nonzero and are drawn from independent and identical distributions with a positive standard deviation $\mathrm{std}(\ln|J|)>0$. Its critical point is at $\ln |h_c|=\overline{\ln|J|}$. The parity operator is defined as $P\equiv \prod_{i=1}^L Z_i$ and commutes with the Hamiltonian.

\begin{theorem}[Stretched-exponential bound for the parity-restricted energy gap]\label{thm:main}
At the critical point $h=h_c$, with high probability, the parity-restricted energy gap of the above 1D RTIM will be bounded by an activated scaling. Specifically, for any targeted failure probability $\epsilon>0$ there exists a constant $c>0$ independent of the system size $L$ such that $\lim_{L\to \infty} \mathrm{Pr}[\Delta_p(L)\le e^{-c\sqrt{L}}]\ge 1-\epsilon$, where $\Delta_p(L)$ is the energy gap of a 1D RTIM of a size $L$ between the ground state and the first excited state with the same parity.
\end{theorem}

Below we present the detailed proof.

\subsection{Mapping to a Markov process}
Through the Jordan-Wigner transformation and a Bogoliubov rotation, the Hamiltonian in Eq.~(\ref{eq:RTIM}) can be diagonalized into free fermions $H=\sum_{k=1}^{L}\epsilon_k (c_k^\dagger c_k-1/2)$, where the energies $0\le\epsilon_1\le\epsilon_2\le\cdots\le\epsilon_L$ of individual fermionic modes are given by the singular values of \cite{PhysRevB.106.064204}
\begin{equation}
M = 2\begin{pmatrix}
 h \\
 J_1 & h \\
  & J_2 & \ddots \\
  &  & \ddots & h
\end{pmatrix}. \label{eq:svd}
\end{equation}

Under this fermionic representation, the parity operator can be expressed as $P=(-1)^{N_f}$ where $N_f=\sum_{k=1}^{L}c_k^\dagger c_k$. The energy spectrum thus splits into two sectors of even and odd parities.
The standard energy gap is $\Delta_1=\epsilon_1$ and the parity-restricted energy gap is $\Delta_p=\epsilon_1+\epsilon_2$, since a single-particle excitation changes the parity.

Furthermore, the singular value decomposition problem in Eq.~(\ref{eq:svd}) can be related to a Markov process with the rate matrix \cite{PhysRevB.106.064204}
\begin{equation}
T=\begin{pmatrix}
 -h^{2} & h^{2} & 0 & \cdots & 0 \\
 J_1^{2} & -h^{2}-J_1^{2} & h^{2} & \ddots & \vdots \\
 \vdots & \ddots & \ddots & -h^{2}-J_{L-2}^{2} & h^{2} \\
 0 & \cdots & 0 & J_{L-1}^{2} & -h^{2}-J_{L-1}^{2}
\end{pmatrix}.
\label{eq:T_generator_ND}
\end{equation}
In particular, the eigenvalues of $-T$ are given by $\lambda_k=\epsilon_k^{2}/4$ ($k=1,\,\cdots,\,L$).
Note that all but the last row of $T$ sum up to zero, while its last row adds up to a negative value, hence $T$ forms a sub-generator of a continuous-time Markov process. In particular, note that $T$ is a tridiagonal matrix, so this Markov process belongs to the birth-death process where an integer population can change by at most one in an infinitesimal time step.

Now we borrow some standard notations from the birth-death process for the following analysis \cite{karlin1957classification}.
Let $p_i\equiv T_{i,i+1}=h^2\mathbbm{1}_{\{i<L\}}$ be the right-jump rate, $q_i\equiv T_{i,i-1}=J_{i-1}^2\mathbbm{1}_{\{i>1\}}$ be the left-jump rate, and $\kappa_i\equiv -\sum_{j}T_{ij}=h^2\mathbbm{1}_{\{i=L\}}$ be the absorption rate at the site $i$.
The current model has absorption only at the right boundary, which is referred to as the Neumann-Dirichlet (ND) boundary conditions. Later we will also consider models with absorption imposed at both ends, which is denoted as the Dirichlet-Dirichlet (DD) boundary conditions.

Ignoring the absorption at the boundary, we can define a new Markov process which is ergodic. Therefore, we can construct a unique equilibrium distribution $\pi=\{\pi_i\}_{i=1}^L$ for this new Markov process, satisfying the detailed balance condition $\pi_i p_i=\pi_{i+1} q_{i+1}$ ($i=1,\,\cdots,\,L-1$). Then we can define the edge conductances as $C_i \equiv \pi_i p_i \equiv \pi_{i+1}q_{i+1}$ in the bulk ($1\le i\le L-1$). The conductances at the ends are defined as $C_0\equiv \pi_1\kappa_1$ and $C_L\equiv \pi_L \kappa_L$, which are nonzero only for Dirichlet-type boundaries. The equilibrium distribution $\pi$ also allows us to define an inner product between any functions $f$ and $g$ on the $L$ sites ($L$-dimensional real vectors) as $\langle f,\,g\rangle_{\pi} \equiv \sum_{i=1}^L \pi_i f_i g_i$. This further leads to a Dirichlet form associated with the matrix $-T$
\begin{equation}
\label{eq:def_dirichlet}
\mathcal{E}(f,\,g) \equiv \langle f,\,-T g\rangle_{\pi} = \sum_{i=1}^{L-1} C_i (f_{i+1}-f_i)(g_{i+1}-g_i)
+ \sum_{i=1}^{L} \kappa_i \pi_i f_i g_i,
\end{equation}
which is apparently symmetric and positive semi-definite for the input vectors. Note that this result holds for both the ND and DD boundary conditions which will be used later.

Finally, we define the Rayleigh quotient $\mathcal{R}[f]\equiv \mathcal{E}(f,\,f)/\langle f,\,f\rangle_{\pi}$ for any nonzero $L$-dimensional vector $f$. The eigenvalues of $-T$ can thus be given by the Courant-Fischer min-max theorem \cite{Horn_Johnson_1985}
\begin{equation}
\lambda_k = \min_{\substack{V\subset \mathbb{R}^L\\ \dim V=k}} \max_{0\neq f\in V} \mathcal{R}[f] \qquad (k=1,\,\cdots,\,L).
\label{eq:CF_thm}
\end{equation}

\subsection{Upper bound of second smallest eigenvalue by breaking into subchains}
As described in the main text, it is well-known that the standard energy gap $\Delta_1=\epsilon_1$ of the 1D RTIM follows an activated scaling. To understand the scaling the parity-restricted energy gap $\Delta_p=\epsilon_1+\epsilon_2$, we thus need to analyze $\epsilon_2$, or equivalently $\lambda_2$ through the Courant-Fischer theorem. Specifically, we will show that $\lambda_2$ can be bounded by cutting the whole chain into two pieces and computing the $\lambda_1$'s of sub-chains.

\begin{proposition}[Two-block upper bound for the second smallest eigenvalue]
Let $T$ be the tridiagonal rate matrix defined above with its second smallest eigenvalue $\lambda_2$. For $m\in\{1,\,\cdots,\,L-1\}$, let $\lambda^{ND}_1([1,\,m])$ be the smallest eigenvalue on the left block $[1,\,m]$ (reflecting at the site $1$ and absorbing at the site $m$), and let $\lambda^{DD}_1([m+1,\,L])$ be the smallest eigenvalue on the right block $[m+1,\,L]$ (absorbing at both sites $m+1$ and $L$). Then
\begin{equation}
\lambda_2 \le 2\times \min_{1\le m\le L-1}\ \max \left\{\lambda^{ND}_1([1,m]),\, \lambda^{DD}_1([m+1,L])\right\}. \label{eq:two_blocks}
\end{equation}
\label{thm:block_upper_bound}
\end{proposition}

\begin{proof}
Fix $m\in\{1,\,\cdots,\,L-1\}$, and let $\phi^L$ and $\phi^R$ denote the eigenfunctions corresponding to the smallest eigenvalues on the left block $[1,\,m]$ and the right block $[m+1,\,L]$, respectively. We further denote their zero-extensions to the whole chain as $\psi^L$ and $\psi^R$, with $\psi^L(i)=\phi^L(i)$ for $i\le m$ and $0$ otherwise, and $\psi^R(i)=\phi^R(i)$ for $i\ge m+1$ and $0$ otherwise. Define the vector space $U=\mathrm{span}\{\psi^L,\,\psi^R\}$ spanned by the two vectors. The Courant-Fischer theorem gives
\begin{equation}
\lambda_2
=\min_{\substack{V\subset \mathbb{R}^L\\ \dim V=2}} \max_{0\neq f\in V} \mathcal{R}[f]
\le \max_{0\ne f\in U}\mathcal{R}[f].
\end{equation}

For general $f=a\psi^L+b\psi^R\in U$, we have $\langle f,\,f\rangle_\pi
= a^2\langle\psi^L,\,\psi^L\rangle_\pi + b^2\langle\psi^R,\,\psi^R\rangle_\pi$.
Also we have
\begin{equation}
\begin{aligned}
\mathcal{E}(f,\,f)
&= a^2\mathcal{E}(\psi^L,\,\psi^L)+b^2\mathcal{E}(\psi^R,\,\psi^R)-2ab\,C_m\,\psi^L(m)\psi^R(m{+}1)\\
&\le a^2\mathcal{E}(\psi^L,\,\psi^L)+b^2\mathcal{E}(\psi^R,\,\psi^R)+a^2C_m\psi^L(m)^2+b^2C_m\psi^R(m{+}1)^2\\
&\le 2a^2\mathcal{E}(\psi^L,\,\psi^L)+2b^2\mathcal{E}(\psi^R,\,\psi^R),
\end{aligned}
\end{equation}
where in the last line we have used the fact that $\mathcal{E}(\psi^L,\,\psi^L)$ is a sum over non-negative terms including $C_m \psi^L(m)^2$, and similarly for $\mathcal{E}(\psi^R,\,\psi^R)$.

Then we have
\begin{equation}
\begin{aligned}
\mathcal{R}[f]
=\frac{\mathcal{E}(f,\,f)}{\langle f,f\rangle_\pi}
&\le\
2\times\frac{a^2\mathcal{E}(\psi^L,\,\psi^L)+b^2\mathcal{E}(\psi^R,\,\psi^R)}{a^2\langle\psi^L,\,\psi^L\rangle_\pi + b^2\langle\psi^R,\,\psi^R\rangle_\pi}\\
&\le 2\max \left\{\frac{\mathcal{E}(\psi^L,\,\psi^L)}{\langle\psi^L,\psi^L\rangle_\pi},\, \frac{\mathcal{E}(\psi^R,\,\psi^R)}{\langle\psi^R,\psi^R\rangle_\pi}\right\}\\
&= 2\max \left\{\frac{\mathcal{E}_L(\psi^L,\,\psi^L)}{\langle\psi^L,\psi^L\rangle_{\pi_L}},\, \frac{\mathcal{E}_R(\psi^R,\,\psi^R)}{\langle\psi^R,\psi^R\rangle_{\pi_R}}\right\}\\
&=2\max\left\{\lambda^{ND}_1([1,\,m]),\, \lambda^{DD}_1([m+1,\,L])\right\},
\end{aligned}
\end{equation}
where in the second line we use the inequality $(a+b)/(c+d)\le \max\{a/c,\,b/d\}$ when $a,\,b,\,c,\,d>0$.
In the third line, we note that when restricting the equilibrium distribution $\pi$ to the two blocks, it just gives the corresponding sub-chain equilibrium distributions $\pi_L$ and $\pi_R$ up to a normalization. Similarly, the Dirichlet form $\mathcal{E}$ in Eq.~(\ref{eq:def_dirichlet}) simply maps to the sub-chain ones $\mathcal{E}_L$ and $\mathcal{E}_R$ with the bulk terms $C_m \psi^L(m)^2$ and $C_m \psi^R(m+1)^2$ mapped to the boundary terms $\kappa_m \pi_m \psi^L(m)^2$ and $\kappa_{m+1} \pi_{m+1} \psi^R(m+1)^2$, respectively.

Taking the maximum over $f\in U$ and then the minimum over the division $m$. we complete the proof.
\end{proof}

\subsection{Upper bound of smallest eigenvalue under ND boundaries}
The left block with ND boundary conditions is just a shorter 1D RTIM, and many previous works have proven that its smallest eigenvalue follows a stretched exponential scaling \cite{PhysRevLett.69.534,PhysRevB.51.6411,PhysRevB.53.8486,PhysRevB.58.9131,FISHER1999222}. Here we present a proof for completeness, as some intermediate results will be used to numerically compute the upper bound $\Delta_b$ of the parity-restricted energy gap in the main text.

\begin{proposition}[Stretched-exponential bound for $\lambda^{ND}_1$]
At the critical point $h=h_c$, for any targeted failure probability $\epsilon>0$ there exists a constant $c>0$ independent of the system size $L$ such that $\lim_{L\to \infty} \mathrm{Pr}[\lambda^{ND}_1(L)\le e^{-c\sqrt{L}}]\ge 1-\epsilon$, where $\lambda^{ND}_1(L)$ is the smallest eigenvalue of the negative rate matrix $-T$ [Eq.~(\ref{eq:T_generator_ND})] with ND boundary conditions.
\label{thm:nd}
\end{proposition}

\begin{proof}
From the Courant–Fischer theorem [Eq.~(\ref{eq:CF_thm})] we have
\begin{equation}\label{eq:ND_CF}
\lambda_1^{ND}(L) \le \frac{\mathcal E(f,\,f)}{\langle f,\,f\rangle_\pi} \qquad (\forall f\neq 0).
\end{equation}
Therefore, we can bound $\lambda_1^{ND}(L)$ by constructing a suitable trial vector $f$.

Define $u_\ell\equiv\sum_{i=1}^{\ell-1}\ln(J_i/h)$. Then we have $\pi_k=\pi_L\,e^{2(u_L-u_k)}$.
For any $m\in\{1,\dots,L-1\}$, we define the suffix sums
\begin{equation}\label{eq:ND_suffix}
S_i \equiv \sum_{j=i}^{L}C_j^{-1},
\end{equation}
with $C_i=\pi_i p_i=\pi_{i+1}q_{i+1}$ and $C_L=\pi_L \kappa_L$ as defined before. Then we can construct the trial function $f$ with
\begin{equation}\label{eq:ND_trial}
f_i=\begin{cases}
1 & (1\le i<m)\\
S_i/S_m & (m\le i\le L).
\end{cases}
\end{equation}
Noting that $S_{i+1}-S_i=-C_i^{-1}$, we get
\begin{equation}\label{eq:ND_energy}
\mathcal E(f,\,f)=\frac{1}{S_m^2}\Big(\sum_{i=m}^{L-1}\frac{1}{C_i}+\frac{1}{C_L}\Big)=\frac{1}{S_m}.
\end{equation}

Moreover, observe that
\begin{equation}\label{eq:ND_norm}
\langle f,\,f\rangle_\pi \ge \sum_{i=1}^{m}\pi_i  = \pi_L\sum_{i=1}^{m} e^{2(u_L-u_i)},
\end{equation}
and
\begin{equation}
S_m=\frac{1}{h^2\pi_L} \sum_{j=m}^{L} e^{2(u_j-u_L)}.
\end{equation}
Hence we can bound
\begin{equation}\label{eq:ND_lambda_bound}
\lambda_1^{ND}(L) \le \frac{\mathcal E(f,\,f)}{\langle f,\,f\rangle_\pi}
 \le
\frac{h^2}{\sum_{i=1}^{m} e^{2(u_L-u_i)}\times \sum_{j=m}^{L} e^{2(u_j-u_L)}} = \frac{h^2}{\sum_{i=1}^{m} e^{-2u_i}\times \sum_{j=m}^{L} e^{2u_j}}.
\end{equation}
Note that $\sum_{i=1}^{m} e^{-2u_i}\ge e^{-2\min_{1\le i\le m}u_i}$ and $\sum_{j=m}^{L} e^{2u_j}\ge e^{2\max_{m\le j\le L}u_j}$, therefore we have
\begin{equation}\label{eq:ND_stretched_exp_exact}
\lambda_1^{ND}(L) \le  h^2 e^{-2(\max_{m\le j\le L}u_j - \min_{1\le i\le m}u_i)}.
\end{equation}
Let us define
\begin{equation}
U_L \equiv \frac{2\ln h}{\sqrt{L}} - 2 \frac{\max_{m\le j\le L} u_j -\min_{1\le i\le m}u_i}{\sqrt{L}},
\end{equation}
then we have $\ln \lambda_1^{ND}(L) \le U_L \sqrt{L}$.

To show the stretched exponential scaling of $\lambda_1^{ND}(L)$, we thus want to bound $U_L$. We can define $\xi_i\equiv \ln (J_i/h)$ such that $u_\ell = \sum_{i=1}^{\ell-1} \xi_i$. Note that we are considering an RTIM with $J_i$'s following i.i.d. distributions with a positive standard deviation $\mathrm{std}(\ln|J|)>0$. Therefore, at the critical point $\ln h_c = \overline{\ln |J|}$ we can regard $\xi_i$'s as independent random variables satisfying $\mathbb E[\xi_i]=0$ and $\mathrm{Var}(\xi_i)\equiv \delta^2>0$.

Let us further define right and left partial sums with a length of $r$ from the cutting point $m$
\begin{equation}\label{eq:ND_partial_sums}
S_r^+\equiv \sum_{i=m}^{m+r-1}\xi_i\quad(0\le r\le L-m),\qquad
S_r^-\equiv \sum_{i=m-r}^{m-1}\xi_i\quad(0\le r\le m-1).
\end{equation}
Note that the two partial sums involve different terms and are independent from each other.

Then we have
\begin{equation}\label{eq:ND_extrema}
\max_{m\le j\le L}u_j=u_m+\max_{0\le r\le L-m}S_r^+,\qquad
\min_{1\le i\le m}u_i=u_m-\max_{0\le r\le m-1}S_r^-,
\end{equation}
so that
\begin{equation}\label{eq:ND_UL_partial}
U_L=\frac{2\ln h}{\sqrt{L}}
-2\left(
\frac{\max_{0\le r\le L-m}S_r^+}{\sqrt{L}}
+\frac{\max_{0\le r\le m-1}S_r^-}{\sqrt{L}}
\right).
\end{equation}

The Donsker's theorem~\cite{durrett2019probability} shows that the partial sums in Eq.~(\ref{eq:ND_partial_sums}) converge to Brownian motions in the limit $L\to\infty$. Specifically, we can define two random processes
\begin{equation}\label{eq:ND_CLT}
X_t^{+} \equiv \frac{S_{\lfloor Lt\rfloor}^+}{\delta\sqrt{L}}\quad (0\le t \le 1-m/L),\qquad X_t^{-} \equiv \frac{S_{\lfloor Lt\rfloor}^-}{\delta\sqrt{L}} \qquad (0\le t \le (m-1)/L),
\end{equation}
which will converge to two independent standard Brownian motions in the limit $L\to\infty$ and $m/L \to \alpha$ ($0<\alpha<1$).

For a standard Brownian motion $W_t$, the Bachelier-L\'{e}vy formula~\cite{lerche1986boundary} shows that its first-passage time $T\equiv \inf\{t>0:W_t\ge c\}$ across a boundary at $c>0$ follows the distribution $\mathrm{Pr}(T\le t)=1-\Phi(c/\sqrt{t})+\Phi(-c/\sqrt{t})$, where $\Phi(x)$ is the cumulative distribution function of a standard normal distribution. Therefore we have
\begin{equation}
\mathrm{Pr}(\max_{0\le t\le \gamma}W_t \ge c) = \mathrm{Pr}(T\le\gamma) = 1 -\Phi(c/\sqrt{\gamma}) + \Phi(-c/\sqrt{\gamma}) =\mathrm{erfc}(c/\sqrt{2\gamma}),
\end{equation}
where $\mathrm{erfc}(x)\equiv \frac{2}{\sqrt{\pi}}\int_x^\infty e^{-t^2} dt$ is the complementary error function.
In other words, for any fixed $0<\alpha<1$ and any $c>0$, we have
\begin{equation}
\lim_{\substack{L\to \infty\\m=\lfloor \alpha L\rfloor}}\mathrm{Pr}\left(\frac{\max_{0\le r\le L-m}S_r^+}{\sqrt{L}}\ge c\right) = \lim_{L\to\infty}\mathrm{Pr}\left(\max_{0\le t\le 1-\alpha}X_t^+\ge \frac{c}{\delta}\right) = \mathrm{erfc}\left(\frac{c}{\delta\sqrt{2(1-\alpha)}}\right),
\end{equation}
and similarly
\begin{equation}
\lim_{\substack{L\to \infty\\m=\lfloor \alpha L\rfloor}}\mathrm{Pr}\left(\frac{\max_{0\le r\le m-1}S_r^-}{\sqrt{L}}\ge c\right) = \lim_{L\to\infty}\mathrm{Pr}\left(\max_{0\le t\le \alpha}X_t^-\ge \frac{c}{\delta}\right) = \mathrm{erfc}\left(\frac{c}{\delta\sqrt{2\alpha}}\right).
\end{equation}

Note that $\lim_{L\to\infty}(2\ln h/\sqrt{L})=0$. Combining the above results, we have for any fixed $0<\alpha<1$ and any $c>0$
\begin{equation}
\begin{aligned}
\lim_{\substack{L\to \infty\\m=\lfloor \alpha L\rfloor}} \mathrm{Pr} (U_L \le -c) &\ge \mathrm{Pr} \left(\frac{\max_{0\le r\le L-m}S_r^+}{\sqrt{L}} \ge c/4\right) \times \mathrm{Pr} \left(\frac{\max_{0\le r\le L-m}S_r^+}{\sqrt{L}} \ge c/4\right) \\
&= \mathrm{erfc}\left(\frac{c}{4\delta\sqrt{2(1-\alpha)}}\right) \mathrm{erfc}\left(\frac{c}{4\delta\sqrt{2\alpha}}\right) \equiv q(c).
\end{aligned}
\end{equation}
Observe that for any $\delta>0$ and $0<\alpha<1$, $q(c)$ is a continuous and monotonic function on $[0,\,+\infty)$ with $q(0)=1$ and $\lim_{c\to +\infty}q(c)=0$. Therefore, for any targeted failure probability $0<\epsilon<1$, we can choose a constant $c=q^{-1}(1-\epsilon)$ such that
\begin{equation}
\lim_{L\to \infty} \mathrm{Pr}[\lambda_1^{ND}(L) \le e^{-c\sqrt{L}}] \ge \lim_{\substack{L\to \infty\\m=\lfloor \alpha L\rfloor}} \mathrm{Pr} (U_L \le -c) \ge 1-\epsilon.
\end{equation}
\end{proof}

\subsection{Upper bound of smallest eigenvalue under DD boundaries}
Similar ideas can be used to bound the smallest eigenvalue for the right block with DD boundary conditions. Here the tridiagonal rate matrix in Eq.~(\ref{eq:T_generator_ND}) is modified into
\begin{equation}
T=\begin{pmatrix}
 -h^{2}-J_0^2 & h^{2} & 0 & \cdots & 0 \\
 J_1^{2} & -h^{2}-J_1^{2} & h^{2} & \ddots & \vdots \\
 \vdots & \ddots & \ddots & -h^{2}-J_{L-2}^{2} & h^{2} \\
 0 & \cdots & 0 & J_{L-1}^{2} & -h^{2}-J_{L-1}^{2}
\end{pmatrix},
\label{eq:T_generator_DD}
\end{equation}
with the right-jump rate $p_i\equiv T_{i,i+1}=h^2\mathbbm{1}_{\{i<L\}}$, the left-jump rate $q_i\equiv T_{i,i-1}=J_{i-1}^2\mathbbm{1}_{\{i>1\}}$, and the absorption rates $\kappa_1=J_0^2$ and $\kappa_L=h^2$ at the boundaries.

\begin{proposition}[Stretched-exponential bound for $\lambda_1^{DD}$]
At the critical point $h=h_c$, for any targeted failure probability $\epsilon>0$ there exists a constant $c>0$ independent of the system size $L$ such that $\lim_{L\to \infty} \mathrm{Pr}[\lambda^{DD}_1(L)\le e^{-c\sqrt{L}}]\ge 1-\epsilon$, where $\lambda^{DD}_1(L)$ is the smallest eigenvalue of the negative rate matrix $-T$ [Eq.~(\ref{eq:T_generator_DD})] with DD boundary conditions.
\label{thm:dd}
\end{proposition}
\begin{proof}
Again, from the Courant–Fischer theorem [Eq.~(\ref{eq:CF_thm})] we have
\begin{equation}\label{eq:DD_CF}
\lambda_1^{DD}(L) \le \frac{\mathcal{E}(f,\,f)}{\langle f,\,f\rangle_{\pi}}\qquad(\forall f\neq 0).
\end{equation}
Hence we can bound $\lambda_1^{DD}(L)$ by constructing a suitable trial vector $f$.

Again we define $\xi_i\equiv \ln (J_i/h)$ and $u_i = \sum_{k=1}^{i-1} \xi_i$ ($1\le i\le L$). We define $C_i=\pi_i h^2=\pi_{i+1}J_i^2$ ($1\le i \le L-1$) and $C_L=\pi_L h^2$ as defined before, but this time with the additional $C_0=\pi_1 J_0^2$. Then we define the prefix and the suffix sums
\begin{equation}\label{eq:DD_PS}
P_i \equiv \sum_{j=0}^{i-1}C_j^{-1},\qquad
S_i \equiv \sum_{j=i}^{L}C_j^{-1},
\end{equation}
so that $P_{i+1}-P_i=S_i-S_{i+1}=1/C_i$.

Now we split the chain into three parts by $1\le r<s\le L$ and consider a dent–shaped trial function $f$ with
\begin{equation}\label{eq:DD_trial}
f_i=\begin{cases}
P_i/P_r & (1\le i\le r)\\
1 & (r<i<s)\\
S_i/S_s & (s\le i\le L).
\end{cases}
\end{equation}
Direct calculation gives
\begin{equation}\label{eq:DD_energy_exact}
\mathcal{E}(f,\,f) = \frac{1}{P_r} + \frac{1}{S_s} \le \frac{h^2\pi_L}{\sum_{j=1}^{r-1}e^{2(u_j-u_L)}} + \frac{h^2\pi_L}{\sum_{j=s}^{L}e^{2(u_j-u_L)}},
\end{equation}
where in the inequality we have discarded the $1/C_0$ term in $P_r$.

On the other hand, discarding the function values $f_i$ at $i<r$ or $i>s$, we have
\begin{equation}\label{eq:DD_normLB}
\langle f,\, f\rangle_\pi \ge \sum_{i=r}^{s}\pi_i\ =\ \pi_L\sum_{i=r}^{s}e^{2(u_L-u_i)}.
\end{equation}
Combining Eqs.~(\ref{eq:DD_CF}–\ref{eq:DD_normLB}) yields
\begin{equation}\label{eq:DD_two_term}
\lambda_1^{\mathrm{DD}}(L) \le
\frac{h^2}{\sum_{i=r}^{s}e^{-2u_i}}\Bigg(
\frac{1}{\sum_{j=s}^{L}e^{2u_j}}+\frac{1}{\sum_{j=1}^{r-1}e^{2u_j}}
\Bigg).
\end{equation}
Again we can bound $\sum_{i=r}^{s}e^{-2u_i}\ge e^{-2\min_{r\le i\le s}u_i}$, $\sum_{j=s}^{L}e^{2u_j}\ge e^{2\max_{s\le j\le L}u_j}$ and $\sum_{j=1}^{r-1}e^{2u_j}\ge e^{2\max_{1\le j\le r-1}u_j}$, then we obtain
\begin{equation}\label{eq:DD_expbound}
\lambda_1^{\mathrm{DD}}(L) \le h^2\left[
e^{-2(\max_{s\le i\le L}u_i-\min_{r\le i\le s}u_i)}
 +
e^{-2(\max_{1\le i\le r-1}u_i-\min_{r\le i\le s}u_i)}
\right].
\end{equation}

We can further bound the exponents by independent terms as
\begin{equation}\label{eq:DD_AR}
\max_{s\le i\le L}u_i - \min_{r\le i\le s} u_i = \max_{0\le l\le L-s} \sum_{i=s}^{s+l-1}\xi_i + \max_{0\le l\le s-r} \sum_{i=s-l}^{s-1}\xi_i \ge \max_{0\le l\le L-s} \sum_{i=s}^{s+l-1}\xi_i \equiv A_R,
\end{equation}
and
\begin{equation}\label{eq:DD_AL}
\max_{1\le i\le r-1}u_i-\min_{r\le i\le s}u_i = \max_{1\le l\le r-1} \sum_{i=r-l}^{r-1}(-\xi_i) + \max_{0\le l\le s-r} \sum_{i=r}^{r+l-1}(-\xi_i) \ge \max_{1\le l\le r-1} \sum_{i=r-l}^{r-1}(-\xi_i) \equiv A_L,
\end{equation}
where for the inequalities we use the fact that the maximum is no less than the $l=0$ term.
Then we have $\lambda_1^{\mathrm{DD}}(L) \le 2h^2 e^{-2 \min\{A_R,A_L\}}$, and hence
\begin{equation}\label{eq:DD_logbound}
\frac{\ln\lambda_1^{DD}(L)}{\sqrt{L}}
 \le \frac{\ln 2 + 2\ln h}{\sqrt{L}}
 - \frac{2}{\sqrt{L}}\min\{A_R,A_L\} \equiv U_R.
\end{equation}

Similar as before, we compute the probability distributions in the limit $L\to \infty$ by mapping to Brownian motions. Specifically, for fixed $0<\alpha<\beta<1$, we set $r=\lfloor \alpha L\rfloor$ and $s=\lfloor \beta L\rfloor$. Then for any $c>0$, we have
\begin{equation}
\lim_{\substack{L\to \infty\\s=\lfloor \beta L\rfloor}}\mathrm{Pr}\left(\frac{A_R}{\sqrt{L}}\ge c\right) = \lim_{L\to\infty}\mathrm{Pr}\left(\max_{0\le t\le 1-\beta}W_t \ge \frac{c}{\delta}\right) = \mathrm{erfc}\left(\frac{c}{\delta\sqrt{2(1-\beta)}}\right),
\end{equation}
and
\begin{equation}
\lim_{\substack{L\to \infty\\r=\lfloor \alpha L\rfloor}}\mathrm{Pr}\left(\frac{A_L}{\sqrt{L}}\ge c\right) = \lim_{L\to\infty}\mathrm{Pr}\left(\max_{0\le t\le \alpha}W_t \ge \frac{c}{\delta}\right) = \mathrm{erfc}\left(\frac{c}{\delta\sqrt{2\alpha}}\right),
\end{equation}

Again, note that $\lim_{L\to\infty}(\ln 2 + 2\ln h)/\sqrt{L}=0$. Combining these results, we get for any fixed $0<\alpha<\beta<1$ and any $c>0$
\begin{equation}
\begin{aligned}
\lim_{\substack{L\to \infty\\r=\lfloor \alpha L\rfloor,\,s=\lfloor \beta L\rfloor}} \mathrm{Pr} (U_R \le -c) &= \mathrm{Pr} \left(\frac{A_R}{\sqrt{L}} \ge c/2\right) \times \mathrm{Pr} \left(\frac{A_L}{\sqrt{L}} \ge c/2\right) \\
&= \mathrm{erfc}\left(\frac{c}{2\delta\sqrt{2(1-\beta)}}\right) \mathrm{erfc}\left(\frac{c}{2\delta\sqrt{2\alpha}}\right) \equiv p(c).
\end{aligned}
\end{equation}
Again we note that for $\delta>0$ and $0<\alpha<\beta<1$, $p(c)$ is a continuous and monotonic function on $[0,\,+\infty)$ with $p(0)=1$ and $\lim_{c\to\infty}p(c)=0$.
Therefore, for any targeted failure probability $0<\epsilon<1$, we can choose a constant $c=p^{-1}(1-\epsilon)$ such that
\begin{equation}
\lim_{L\to \infty} \mathrm{Pr}[\lambda_1^{DD}(L) \le e^{-c\sqrt{L}}] \ge \lim_{\substack{L\to \infty\\r=\lfloor \alpha L\rfloor,\,s=\lfloor \beta L\rfloor}} \mathrm{Pr} (U_R \le -c) = 1-\epsilon.
\end{equation}
\end{proof}

\subsection{Completing the proof of the main theorem}
We are now equipped with all the relevant propositions, so the proof is straightforward.

\begin{proof}[Proof of Theorem~\ref{thm:main}]
Let $-T$ be the negative of the rate matrix in Eq.~(\ref{eq:T_generator_ND}) with eigenvalues $0\le \lambda_1(L)\le\lambda_2(L)\le\cdots$, and set $m\equiv\lfloor L/2\rfloor$. By Proposition~\ref{thm:nd} and Proposition~\ref{thm:dd}, at the critical point $h=h_c$, for any targeted failure probability $\epsilon>0$ there exist $c_L,\,c_R>0$ independent of the system sizes such that
\begin{equation}
\lim_{\substack{L\to \infty\\m=\lfloor L / 2\rfloor}} \Pr \left(\lambda^{ND}_1([1,\,m])\le e^{-c_L\sqrt{L}}\right)\ge \sqrt{1-\epsilon},
\end{equation}
and
\begin{equation}
\lim_{\substack{L\to \infty\\m=\lfloor L / 2\rfloor}} \Pr \left(\lambda^{DD}_1([m+1,\,L])\le e^{-c_R\sqrt{L}}\right)\ge \sqrt{1-\epsilon}.
\end{equation}
Then by taking $c_0=\min\{c_L,\,c_R\}$, Proposition~\ref{thm:block_upper_bound} yields
\begin{equation}
\lim_{L\to\infty}\Pr \left(\lambda_2(L)\le 2 e^{-c_0\sqrt{L}}\right) \ge 1-\epsilon.
\end{equation}
Note that the parity-restricted energy gap $\Delta_p(L)=\epsilon_1(L)+\epsilon_2(L)\le 2\epsilon_2(L)=4\sqrt{\lambda_2(L)}$. Therefore we have $\lim_{L\to\infty}\Pr \left(\Delta_p(L)\le 4\sqrt{2} e^{-(c_0/2)\sqrt{L}}\right) \ge 1-\epsilon$. Finally, since for any $\eta>0$ we have $4\sqrt{2}\le e^{\eta\sqrt{L}}$ for sufficiently large $L$, we can take any $0<c<c_0/2$ such that
\begin{equation}
\begin{aligned}
\lim_{L\to\infty}\Pr \left(\Delta_p(L)\le e^{-c\sqrt{L}}\right) &= \lim_{L\to\infty}\Pr \left(\Delta_p(L)\le e^{-(c_0/2)\sqrt{L}} e^{(c_0/2-c)\sqrt{L}}\right) \\
&\ge \lim_{L\to\infty}\Pr \left(\Delta_p(L)\le 4\sqrt{2}e^{-(c_0/2)\sqrt{L}}\right)\\
&\ge 1-\epsilon.
\end{aligned}
\end{equation}
\end{proof}

\section{Upper bound for parity-restricted energy gap}
The proof in Sec.~\ref{sec:proof} also gives us explicit formulas to upper bound the parity-restricted energy gap for any set of Ising couplings $J_i$'s and any transverse field $h$ at any system size $L$.
\begin{enumerate}
\item We can use Eq.~(\ref{eq:ND_stretched_exp_exact}) to bound $\lambda_1(L)$ and hence $\epsilon_1(L)=2\sqrt{\lambda_1(L)}$.
\item For any division at $m\in\{1,\,\cdots,\,L-1\}$, we can use Eq.~(\ref{eq:ND_stretched_exp_exact}) to bound $\lambda_1^{ND}([1,\,m])$ and use Eq.~(\ref{eq:DD_expbound}) to bound $\lambda_1^{DD}([m+1,\,L])$. Then we can use Eq.~(\ref{eq:two_blocks}) to bound $\lambda_2(L)$ and hence $\epsilon_2(L)=2\sqrt{\lambda_2(L)}$.
\item Finally we bound $\Delta_p(L)=\epsilon_1+\epsilon_2$ using the above bounds for $\epsilon_1$ and $\epsilon_2$.
\end{enumerate}

In principle any subdivisions $m$, $r$ and $s$ will work when using Eq.~(\ref{eq:ND_stretched_exp_exact}) and Eq.~(\ref{eq:DD_expbound}). However, to make the obtained bound tighter in numerical calculations, we further optimize over these subdivisions. The overall time cost is $O(L^3)$ for the largest system size $L$, and we note that many intermediate results can be reused for the numerical calculation of smaller system sizes to improve the efficiency.

%